\newenvironment{np-problem}[3]{%
\medskip
 \par\noindent {\bfseries #1}\\[0.5em]
 \begin{tabular}{@{}p{1.6cm}p{\dimexpr\linewidth-3cm-1em}@{}}
  \textit{Instance:} & #2 \\
   \textit{Question:} & #3 \\ \medskip
 \end{tabular}
 \par
}{}
\newenvironment{np-hard-problem}[3]{%
\medskip
 \par\noindent {\bfseries #1}\\[0.5em]
 \begin{tabular}{@{}p{1.6cm}p{\dimexpr\linewidth-3cm-1em}@{}}
  \textit{Instance:} & #2 \\
   \textit{Required:} & #3 \\ \medskip
 \end{tabular}
 \par
}{}
\newtheorem{lemma}{Lemma}
\newtheorem{theorem}{Theorem}
\newtheorem{claim}{Claim}
\newtheorem{definition}{Definition}
\newcommand{\dist}{\operatorname{dist}}
\newcommand{\maxdcc}[2]{$(#1,#2)$\textsc{-MAX-DCC}}
\newcommand{\pc}[2]{$(#1,#2)$\textsc{-PC}}
\title{Disjoint covering of bipartite graphs with $s$-clubs}
\author{Angelo Monti\inst{1}\orcidID{0000-0002-3309-8249} \and
Blerina Sinaimeri\inst{2}\orcidID{0000-0002-9797-7592} }
\author{A. Monti\footnote{Computer Science Department, Sapienza University of Rome, Italy,
\href{mailto:monti@di.uniroma1.it}{monti@di.uniroma1.it}}
 \and B. Sinaimeri\footnote{Luiss University, Rome, Italy. \href{mailto:bsinaimeri@luiss.it}{bsinaimeri@luiss.it}}}
\newcommand{\mypath}{\mathrel{\text{\textemdash\!\textemdash}}}
\begin{document}
\maketitle
\begin{abstract}
For a positive integer $s$, an $s$-club in a graph $G$ is a set of vertices inducing a subgraph with diameter at most $s$. As generalizations of cliques, $s$-clubs offer a flexible model for real-world networks. This paper addresses the problems of partitioning and disjoint covering of vertices with $s$-clubs on bipartite graphs. First we consider the \pc{k}{s} problem where ask whether the vertices of $G$ can be partitioned into at most $k$ disjoint $s$-clubs. We prove that for any fixed $k \geq 2$ and for any fixed odd $s \geq 3$ or even $s\geq 8$, the \pc{k}{s} problem  is NP-complete even for bipartite graphs. Note that our NP-completeness result is stronger than the one in Abbas and Stewart (1999), as we assume that both $s$ and $k$ are  constants and not part of the input.

Additionally, we study the Maximum Disjoint $(t,s)$-Club Covering problem (\maxdcc{t}{s}), which aims to find a collection of vertex-disjoint $(t,s)$-clubs (i.e. $s$-clubs with at least $t$ vertices) that covers the maximum number of vertices in $G$. We  prove that it is NP-hard to achieve an  approximation factor of $\frac{95}{94} $ for \maxdcc{t}{3} for any fixed $t\geq 8$
and for \maxdcc{t}{2} for any fixed $t\geq 5$ even for bipartite graphs.  Previously, results were known only for \maxdcc{3}{2}. Finally, we provide a polynomial-time algorithm for \maxdcc{2}{2} resolving an open problem from Dondi \textit{et al.} (2019).
\end{abstract}

\section{Introduction}
For a positive integer $s$, an $s$-club in a graph $G$ is a set of vertices that induces a subgraph of $G$ of diameter at most $s$.  Clubs are generalizations of cliques ($1$-clubs are exactly cliques) and offer a wider and more practical way to model real-world interactions \cite{Mokken1979,Mokken2016,Laan2016,Komusiewicz2016,Zoppis2018,ZOU2018}. Partitioning a graph into cliques is important for clustering and community detection. Consequently, there has been research into partitioning graphs into $s$-clubs as a way to extend these methods to more flexible and realistic groupings. In this paper we focus exclusively on bipartite graphs. Bipartite graphs are of particular interest due to their wide range of applications in various fields such as scheduling, matching problems, recommendation systems and network flow optimization \cite{Asratian1998}. 
They also provide a particularly interesting setting for 
$s$-club partitioning and covering. Indeed, since bipartite graphs contain no triangles, clique-based notions of cohesion are limited in this setting, whereas $s$-clubs can still capture tightly connected groups through short alternating paths. This is especially relevant in bipartite networks, where preserving the underlying two-mode structure is often important for community detection and network analysis \cite{Barber2007, Calderer2021}. Thus, $s$-clubs provide a natural framework for modeling cohesive substructures in bipartite graphs.

We examine two closely related problems involving the partitioning and covering of a graph's vertices using $s$-clubs. 
The first problem is the Minimum Partition $s$-Club problem, where the objective is to partition the vertices of a graph $G$ into the minimum number of disjoint $s$-clubs.  This problem is NP-hard for $s\geq 2$ \cite{Deogun1997}, even when restricted to bipartite or chordal graphs \cite{Abbas1999,Chang2014}.  
Clearly from these results we have that the decision version of this problem, where for a fixed $s$, given a graph $G$ and an integer $k$, we decide whether it is possible to partition the vertices of $G$ into at most $k$ disjoint $s$-clubs, is NP-complete. However, this does not address the complexity of the decision problem when $k$ is also fixed (that is both $s$ and $k$ are not part of the input). We thus, consider the problem below.

\begin{np-problem}
    {$k$-Partition $s$-Club problem (\pc{k}{s})}{A graph $G=(V,E)$.}{Is there a partition of $V$ into at most $k$ vertex disjoint $s$-clubs?}
\end{np-problem}

Previous studies have explored the complexity of this problem for some fixed values of $s$ and $k$. For $k=1$, the problem is equivalent to  determining the diameter of a graph and thus is trivially solvable in polynomial time. For $s=1$ the problem is equivalent to determine whether there exists a  partition of the vertices into $k$ cliques, that is equivalent to determine whether there exists a $k$-coloring of the complement graph. This problem is NP-complete for any $k \geq 3$ \cite{GareyJ79} and polynomial for $k=2$. More recently, \cite{BAZGAN2025} it was shown that when we restrict to split graphs the problem  \pc{2}{2} is NP-complete.  When we restrict to bipartite graphs, $s=1$ corresponds to the problem of determining whether there exists a perfect matching of size $k$ in a graph and is thus polynomial. Again for bipartite graphs and $s=2$ and $k=2$ the problem is polynomial \footnote{Notice that if we do not require the $s$-clubs to be disjoint, then the case $s=2$, $k=2$ is shown to be NP-complete for general graphs \cite{DondiL23}.}\cite{Fleischner2009}. This is a distinction with the split graphs where already \pc{2}{2} problem is NP-complete.
In this paper we show that for bipartite graphs and any fixed odd $s\geq 3$  or even  $s\geq 8$ and for any fixed $k\geq 2$  the \pc{k}{s} problem is NP-complete. Note that our NP-completeness result is stronger than the one in \cite{Abbas1999} as we assume that both $s$ and $k$ are  constants and not part of the input.

In some real-world applications, it may not be feasible to partition all vertices into $s$-clubs.  To address such cases, a variant of this problem, known as the Maximum Disjoint $(t,s)$-Club Covering problem (\maxdcc{t}{s}), was introduced by Dondi \emph{et al.} in \cite{Dondi2019}. This problem seeks to find, given a graph $G$, a collection of disjoint $(t,s)$-clubs that covers the maximum number of vertices in $G$.  A $(t,s)$-club is an $s$-club with at least $t$ vertices. The concept of $(t,s)$-clubs extends that of $s$-clubs by adding a minimum size constraint, and is motivated by applications where identifying large, well-connected subgraphs is important (see, e.g., \cite{Laan2016,Dondi2019}). Hence, the second problem we consider is the following: \\

\noindent\begin{minipage}{\textwidth}
\begin{np-hard-problem}
    {Maximum Disjoint $(t,s)$-Club Covering problem (\maxdcc{t}{s})}{A  graph $G=(V,E)$.}{A collection of vertex disjoint $(t,s)$-clubs that covers the maximum number of vertices in $V$. }
\end{np-hard-problem}
\end{minipage}

To the best of our knowledge the only cases considered in literature are the cases $s=2$ and $s=3$ \cite{Dondi2019}. In particular, for $s=2$ in \cite{Dondi2019} it is proved that \maxdcc{3}{2} is APX-hard and the case \maxdcc{2}{2} is left open. Here we show that \maxdcc{t}{2} is APX-hard for any fixed $t\geq 5$ even for bipartite graphs and \maxdcc{2}{2} can be solved in polynomial time for general graphs. For $s=3$ in \cite{Dondi2019} it is shown that \maxdcc{2}{3} is polynomial and the case \maxdcc{3}{3} is left as an open problem. We  show that \maxdcc{t}{3} is APX-hard for any fixed $t \geq 8$, even for bipartite graphs.

The paper is organized as follows: In Section~\ref{sec:preliminaries} we introduce the definitions we need for the paper. In Section~\ref{sec:partition-npc} we show that for any fixed $k\geq 2$ and for any fixed $s \geq 3 $ with $s\neq 4, s\neq 6$  the \pc{k}{s} problem is NP-complete.  In Section~\ref{sec:hardness-max-dcc} we prove that it is NP-hard to achieve an  approximation factor of $\frac{95}{94} $ for\maxdcc{t}{3} for any fixed $t\geq 8$ and for \maxdcc{t}{2} for any fixed $t\geq 5$ even for bipartite graphs. On the positive side we provide a polynomial-time algorithm for \maxdcc{2}{2}. Finally, in Section~\ref{sec:conslusions} we conclude with some open problems.

\section{Preliminaries}\label{sec:preliminaries}
All the graphs we consider here are undirected and simple (with no loops or multiple edges). For a graph $G=(V,E)$ and a subset $V' \subseteq V$ we denote by $G[V']$ the subgraph induced by the vertices in $V'$.  For  any two vertices $u, v \in V$ we denote by $u\mypath v$ a path connecting $u$ and $v$ in $G$. To clarify notation and avoid confusion, we will slightly abuse notation by writing $(u,v)$ for the edge between $u$ and $v$, instead of $\{u,v\}$. A subset $V'$ of vertices is called an \textit{$s$-club} if the diameter of $G[V']$ is at most $s$. In other words, every pair of vertices in the $s$-club can be connected by a path of length at most $s$ within the subgraph. An \textit{$(t,s)$-club} is an $s$-club of at least $t$ vertices.

The \textit{closed neighborhood} of a vertex $v$ in a graph $G=(V,E)$, denoted by $N[v]$, is the set consisting of the vertex $v$ itself and all vertices adjacent to $v$. Formally,  $ N[v] = \{v\} \cup \{u \in V \mid (u, v) \in E\}$.

A graph $G = (V, E)$ is \textit{bipartite} if its vertices can be partitioned into two independent sets. We denote it as $G = (V_1, V_2, E)$, where $V_1$ and $V_2$ are the independent sets.
A tree $T_G$ is said to be a \textit{spanning tree} of a connected graph $G$ if $T_G$ is a subgraph of $G$ (not necessarily induced) and $T_G$ contains all vertices of $G$. A \textit{rooted tree} is a tree with a special vertex labelled as the \textit{root}  of the  tree. In a tree, a vertex $v$ is said to be at \textit{level} $l$ if $v$ is at a distance $l$ from the root. The \textit{height} of a tree is the maximum level which occurs in the tree. The \textit{parent} of a vertex $v$ is the vertex connected to $v$ on the path to the root.

For any integer $k$ we denote by $[k]$ the set $\{1,2, \ldots, k\}$. We denote by $2^{[k]}$ the family of all possible subsets of $[k]$.

\section{NP-completeness of \pc{k}{s} problem}\label{sec:partition-npc}
It is claimed in \cite{Fleischner2009} that \pc{k}{2} is NP-complete for every fixed $k \geq 3$, via a reduction from the $k$-List Coloring problem. However, the correctness proof of the reduction contains a flaw. While the construction shows that a valid $k$-list coloring of the original graph yields a partition into at most $k$ subgraphs of diameter at most $2$, the converse implication is not fully justified. In particular, the argument that the existence of a $k$-partition into $2$-clubs in the constructed graph implies the existence of a corresponding $k$-list coloring of the original graph does not hold in general. In this section we prove the NP-completeness of \pc{k}{s} for every $k\geq 2$ and odd $s\geq 3$ and even $s\geq 8$. We start by proving that \pc{2}{3} is NP-complete even for bipartite graphs. We will reduce from the Monotone $3$-SAT problem  \cite{GareyJ79}. The reduction is inspired by the  proof in \cite{BAZGAN2025} where it is shown  that the \pc{2}{2} problem is NP-complete for split graphs.

\begin{np-problem}
    {Monotone $3$-SAT}{A set $X$ of variables, a collection $C$ of clauses over $X$, which contains either only negated variables or only positive variables and such that for each clause $c \in C$, $|c|=3$. }{Is there a satisfying truth assignment for $C$?}
\end{np-problem}

\begin{lemma}\label{lem:PC(2,3)}
The \pc{2}{3} problem is NP-complete even when restricted to bipartite graphs.
\end{lemma}

\begin{proof}
Membership in NP is immediate: given a partition $(X,Y)$ of $V(G)$, one can check in polynomial time whether both $G[X]$ and $G[Y]$ have diameter at most~$3$.

For NP-hardness, we reduce from \textsc{Monotone 3-SAT}. Let
$\phi$ be an instance of \textsc{Monotone 3-SAT} and let 
\[
X(\phi)=\{x_1,\dots,x_n\}
\qquad\text{and}\qquad
C(\phi)=\{c_1,\dots,c_m\}
\]
be the sets of variables and clauses, respectively. We write
\[
C^+ := \{c_i\in C(\phi)\mid c_i \text{ is positive}\},
\qquad
C^- := \{c_i\in C(\phi)\mid c_i \text{ is negative}\}.
\]

Notice that we may assume that $|C^+|\ge 2$ and $|C^-|\ge 2$.
Indeed, if all clauses have the same sign, then $\phi$ is trivially satisfiable:
if all clauses are positive we set every variable to \textsc{True}, and if all
clauses are negative we set every variable to \textsc{False}. Otherwise, if
$\phi$ contains exactly one positive clause (respectively exactly one negative
clause), we simply duplicate that clause. This does not affect satisfiability,
and after this modification we have $|C^+|\ge 2$ and $|C^-|\ge 2$.
From $\phi$ we construct a graph $G=G(\phi)$ as follows.

\medskip
\noindent\textbf{Vertices.}
\begin{itemize}
\item For each variable $x_i$, create a vertex $v_i$.
Let
\[
V_{\mathrm{var}}:=\{v_1,\dots,v_n\}.
\]

\item For each positive clause $c_i\in C^+$, create two vertices
$a_i^+$ and $b_i^+$; for each negative clause $c_i\in C^-$, create two vertices
$a_i^-$ and $b_i^-$. Let
\[
Q_i^+ := \{a_i^+,b_i^+\}, \qquad Q_i^- := \{a_i^-,b_i^-\},
\]
\[
V_{\mathrm{cl}}^+ := \bigcup_{c_i\in C^+} Q_i^+,
\qquad
V_{\mathrm{cl}}^- := \bigcup_{c_i\in C^-} Q_i^-,
\qquad
V_{\mathrm{cl}} := V_{\mathrm{cl}}^+ \cup V_{\mathrm{cl}}^-.
\]

\item For every pair of distinct positive clauses $c_i,c_j\in C^+$ with $i<j$,
create four vertices
\[
P_{i,j}^+ := \{p_{a_i a_j}^+,\, p_{a_i b_j}^+,\, p_{b_i a_j}^+,\, p_{b_i b_j}^+\}.
\]
Similarly, for every pair of distinct negative clauses $c_i,c_j\in C^-$ with $i<j$,
create four vertices
\[
P_{i,j}^- := \{p_{a_i a_j}^-,\, p_{a_i b_j}^-,\, p_{b_i a_j}^-,\, p_{b_i b_j}^-\}.
\]
Let
\[
V_{\mathrm{pair}}^+ := \bigcup_{\substack{c_i,c_j\in C^+\\ i<j}} P_{i,j}^+,
\qquad
V_{\mathrm{pair}}^- := \bigcup_{\substack{c_i,c_j\in C^-\\ i<j}} P_{i,j}^-,
\qquad
V_{\mathrm{pair}} := V_{\mathrm{pair}}^+ \cup V_{\mathrm{pair}}^-.
\]

\item Add four special vertices $u^+,u^-,v^+,v^-$, and let
\[
V_{\mathrm{sp}}:=\{u^+,u^-,v^+,v^-\}.
\]

\end{itemize}

\medskip
\noindent\textbf{Edges.}
\begin{itemize}
\item Suppose a clause $c_i$ contains the variables $x_j,x_\ell,x_r$.
If $c_i\in C^+$, then connect each of $a_i^+,b_i^+$ to each of
$v_j,v_\ell,v_r$.
If $c_i\in C^-$, then connect each of $a_i^-,b_i^-$ to each of
$v_j,v_\ell,v_r$.
Let $E_{\mathrm{cv}}$ denote the set of all these clause-variable edges.

\item For every pair $c_i,c_j\in C^+$ with $i<j$, connect each pair-vertex to
the two clause vertices indicated by its subscript:
\[
\begin{aligned}
p_{a_i a_j}^+a_i^+,\quad p_{a_i a_j}^+a_j^+,\qquad
p_{a_i b_j}^+a_i^+,\quad p_{a_i b_j}^+b_j^+,\\
p_{b_i a_j}^+b_i^+,\quad p_{b_i a_j}^+a_j^+,\qquad
p_{b_i b_j}^+b_i^+,\quad p_{b_i b_j}^+b_j^+ .
\end{aligned}
\]
Define the edges for $P_{i,j}^-$ analogously. Let $E_{\mathrm{pc}}$ be the union
of all these pair-clause edges.

\item Connect $v^+$ to every vertex of $V_{\mathrm{pair}}^+$, and connect $v^-$
to every vertex of $V_{\mathrm{pair}}^-$.

\item Connect $u^+$ to every vertex of $V_{\mathrm{pair}}^+\cup V_{\mathrm{var}}$,
and connect $u^-$ to every vertex of $V_{\mathrm{pair}}^-\cup V_{\mathrm{var}}$.
\end{itemize}

Finally, set
\[
V(G)=V_{\mathrm{var}}\cup V_{\mathrm{cl}}\cup V_{\mathrm{pair}}\cup V_{\mathrm{sp}},
\]
and let $E(G)$ be the union of all edges defined above.

Figure~\ref{fig:gadget-pc2-3} illustrates the construction
for a small instance. 

\begin{figure}[t]
\centering
\begin{tikzpicture}[
  node distance=6mm and 18mm,
  every node/.style={font=\small},
  var/.style={circle, draw, fill=orange!18, minimum size=16pt, inner sep=1pt},
  clp/.style={circle, draw, fill=blue!10, minimum size=16pt, inner sep=1pt},
  cln/.style={circle, draw, fill=blue!10, minimum size=16pt, inner sep=1pt},
  pp/.style={circle, draw, fill=orange!18, minimum size=18pt, inner sep=1pt},
  pn/.style={circle, draw, fill=orange!18, minimum size=18pt, inner sep=1pt},
  sp/.style={rectangle, draw, rounded corners=2pt, fill=blue!10, inner sep=2pt},
  e1/.style={draw, line width=0.5pt, opacity=0.9},
  e2/.style={draw, line width=0.4pt, opacity=0.55},
  e3/.style={draw, dashed, line width=1pt, opacity=0.20}
]

\matrix (Clauses) [
  matrix of nodes,
  nodes={anchor=center},
  row sep=5mm
] {
  \node[clp] (a1p) {$a_1^{+}$};\\
  \node[clp] (b1p) {$b_1^{+}$};\\
  \node[clp] (a2p) {$a_2^{+}$};\\
  \node[clp] (b2p) {$b_2^{+}$};\\[2mm]
  \node[cln] (a3n) {$a_3^{-}$};\\
  \node[cln] (b3n) {$b_3^{-}$};\\
  \node[cln] (a4n) {$a_4^{-}$};\\
  \node[cln] (b4n) {$b_4^{-}$};\\
};

\matrix (Pairs) [
  right=24mm of Clauses,
  matrix of nodes,
  nodes={anchor=center},
  row sep=5mm
] {
  \node[pp] (pAAp) {$p^{+}_{a_1a_2}$};\\
  \node[pp] (pBAp) {$p^{+}_{b_1a_2}$};\\
  \node[pp] (pABp) {$p^{+}_{a_1b_2}$};\\
  \node[pp] (pBBp) {$p^{+}_{b_1b_2}$};\\[2mm]
  \node[pn] (pAAn) {$p^{-}_{a_3a_4}$};\\
  \node[pn] (pBAn) {$p^{-}_{b_3a_4}$};\\
  \node[pn] (pABn) {$p^{-}_{a_3b_4}$};\\
  \node[pn] (pBBn) {$p^{-}_{b_3b_4}$};\\
};

\matrix (Specials) [
  right=20mm of Pairs,
  matrix of nodes,
  nodes={anchor=center},
  row sep=10mm
] {
  \node[sp] (vp) {$v^{+}$};\\
  \node[sp] (up) {$u^{+}$};\\[4mm]
  \node[sp] (vm) {$v^{-}$};\\
  \node[sp] (um) {$u^{-}$};\\
};
\matrix (Vars) [
  right=22mm of Specials,
  matrix of nodes,
  nodes={anchor=center},
  row sep=5mm
] {
  \node[var] (v1) {$v_1$};\\
  \node[var] (v2) {$v_2$};\\
  \node[var] (v3) {$v_3$};\\
  \node[var] (v4) {$v_4$};\\
  \node[var] (v5) {$v_5$};\\
  \node[var] (v6) {$v_6$};\\
};

\foreach \vv in {v1,v2,v3}{
  \draw[e3] (a1p) -- (\vv);
  \draw[e3] (b1p) -- (\vv);
}
\foreach \vv in {v2,v4,v5}{
  \draw[e3] (a2p) -- (\vv);
  \draw[e3] (b2p) -- (\vv);
}
\foreach \vv in {v1,v4,v6}{
  \draw[e3] (a3n) -- (\vv);
  \draw[e3] (b3n) -- (\vv);
}
\foreach \vv in {v3,v5,v6}{
  \draw[e3] (a4n) -- (\vv);
  \draw[e3] (b4n) -- (\vv);
}

\draw[e1] (pAAp) -- (a1p); \draw[e1] (pAAp) -- (a2p);
\draw[e1] (pABp) -- (a1p); \draw[e1] (pABp) -- (b2p);
\draw[e1] (pBAp) -- (b1p); \draw[e1] (pBAp) -- (a2p);
\draw[e1] (pBBp) -- (b1p); \draw[e1] (pBBp) -- (b2p);

\draw[e1] (pAAn) -- (a3n); \draw[e1] (pAAn) -- (a4n);
\draw[e1] (pABn) -- (a3n); \draw[e1] (pABn) -- (b4n);
\draw[e1] (pBAn) -- (b3n); \draw[e1] (pBAn) -- (a4n);
\draw[e1] (pBBn) -- (b3n); \draw[e1] (pBBn) -- (b4n);

\foreach \ppv in {pAAp,pABp,pBAp,pBBp}{
  \draw[e2] (vp) -- (\ppv);
  \draw[e2] (up) -- (\ppv);
}
\foreach \pnv in {pAAn,pABn,pBAn,pBBn}{
  \draw[e2] (vm) -- (\pnv);
  \draw[e2] (um) -- (\pnv);
}

\draw[e2] (up) to[out=10,in=180]   (v1);
\draw[e2] (up) to[out=6,in=180]    (v2);
\draw[e2] (up) to[out=2,in=180]    (v3);
\draw[e2] (up) to[out=-2,in=180]   (v4);
\draw[e2] (up) to[out=-6,in=180]   (v5);
\draw[e2] (up) to[out=-10,in=180]  (v6);

\draw[e2] (um) to[out=12,in=180]   (v1);
\draw[e2] (um) to[out=8,in=180]    (v2);
\draw[e2] (um) to[out=4,in=180]    (v3);
\draw[e2] (um) to[out=0,in=180]    (v4);
\draw[e2] (um) to[out=-4,in=180]   (v5);
\draw[e2] (um) to[out=-8,in=180]   (v6);

\begin{scope}[on background layer]
  \node[
    draw, rounded corners=3pt, inner sep=6pt, fit=(Clauses),
    label={[font=\small]above:{$V_{\mathrm{cl}}$}}
  ] {};
  \node[
    draw, rounded corners=3pt, inner sep=6pt, fit=(Pairs),
    label={[font=\small]above:{$V_{\mathrm{pair}}$}}
  ] {};
  \node[
    draw, rounded corners=3pt, inner sep=6pt, fit=(Specials),
    label={[font=\small]above:{$V_{\mathrm{sp}}$}}
  ] {};
  \node[
    draw, rounded corners=3pt, inner sep=6pt, fit=(Vars),
    label={[font=\small]above:{$V_{\mathrm{var}}$}}
  ] {};
\end{scope}

\end{tikzpicture}
\caption{An example of the graph $G(\phi)$ for
$\phi=(x_1 \vee x_2 \vee x_3)\wedge(x_2 \vee x_4 \vee x_5)\wedge
(\overline{x_1}\vee\overline{x_4}\vee\overline{x_6})\wedge
(\overline{x_3}\vee\overline{x_5}\vee\overline{x_6})$.
For readability, the clause-variable edges are drawn as dashed lines. The two colors of the vertices indicate the bipartition of $G(\phi)$.}
\label{fig:gadget-pc2-3}
\end{figure}

By construction, $G$ is bipartite with bipartition
\[
L:=V_{\mathrm{var}}\cup V_{\mathrm{pair}}
\qquad\text{and}\qquad
R:=V_{\mathrm{cl}}\cup V_{\mathrm{sp}}.
\]

We now prove that $\phi$ is satisfiable if and only if $V(G)$ can be partitioned
into two sets whose induced subgraphs both have diameter at most~$3$.

\medskip
\noindent $(\Rightarrow)$
Assume that $\phi$ is satisfiable, and let $\tau$ be a satisfying truth assignment.

We partition the variable vertices according to $\tau$:
\[
V_{\mathrm{var}}^+ := \{v_i\mid \tau(x_i)=\textsc{True}\},
\qquad
V_{\mathrm{var}}^- := \{v_i\mid \tau(x_i)=\textsc{False}\}.
\]
We then define
\[
X := V_{\mathrm{var}}^+ \cup V_{\mathrm{cl}}^+ \cup V_{\mathrm{pair}}^+ \cup \{u^+,v^+\},
\]
\[
Y := V_{\mathrm{var}}^- \cup V_{\mathrm{cl}}^- \cup V_{\mathrm{pair}}^- \cup \{u^-,v^-\}.
\]
Clearly, $(X,Y)$ is a partition of $V(G)$.

We show that $G[X]$ has diameter at most~$3$; the argument for $G[Y]$ is symmetric.

Observe first that in $G[X]$:
\begin{itemize}
\item $u^+$ is adjacent to every vertex of $V_{\mathrm{var}}^+\cup V_{\mathrm{pair}}^+$;
\item $v^+$ is adjacent to every vertex of $V_{\mathrm{pair}}^+$.
\end{itemize}
Hence:
\begin{itemize}
\item any two vertices of $V_{\mathrm{var}}^+$ are at distance at most $2$ via $u^+$;
\item any vertex of $V_{\mathrm{var}}^+$ and any vertex of $V_{\mathrm{pair}}^+$ are at distance at most $2$ via $u^+$;
\item any two vertices of $V_{\mathrm{pair}}^+$ are at distance at most $2$ via either $u^+$ or $v^+$;
\item $u^+$ and $v^+$ are at distance $2$ through any vertex of $V_{\mathrm{pair}}^+$.
\end{itemize}

It remains to consider the vertices in $V_{\mathrm{cl}}^+$.
Let $c_i$ be a positive clause and let $z \in \{a_i^+, b_i^+\}$ be
one of its clause vertices. Since the assignment $\tau$ satisfies $\phi$,
the clause $c_i$ contains some variable $x_j$ with
$\tau(x_j)=\textsc{True}$. Hence $v_j \in V_{\mathrm{var}}^+$ and,
by construction of $G$, the vertex $z$ is adjacent to $v_j$.
Therefore
\[
u^+ - v_j - z
\]
is a path of length $2$ in $G[X]$. Thus every vertex of
$V_{\mathrm{cl}}^+$ is at distance at most $2$ from $u^+$.

Moreover, recall that we assumed $|C^+|\ge 2$. Hence, besides $c_i$,
there exists another positive clause. By construction, this implies that
there is a vertex $p \in V_{\mathrm{pair}}^+$ adjacent to $z$.
Since $v^+$ is adjacent to every vertex of $V_{\mathrm{pair}}^+$, we obtain
a path
\[
z - p - v^+
\]
of length $2$ in $G[X]$. Therefore every vertex of $V_{\mathrm{cl}}^+$
is also at distance at most $2$ from $v^+$. It follows that every vertex of $V_{\mathrm{cl}}^+$ is at distance at most $3$
from every vertex of $V_{\mathrm{var}}^+ \cup V_{\mathrm{pair}}^+ \cup \{u^+,v^+\}$.

It remains to consider distances between two vertices of $V_{\mathrm{cl}}^+$.
Let $z,z' \in V_{\mathrm{cl}}^+$.
If $z \in Q_i^+$ and $z' \in Q_j^+$ with $i \neq j$, then by construction
there is a pair vertex in $P_{i,j}^+$ adjacent to both $z$ and $z'$.
Hence
\[
\operatorname{dist}_{G[X]}(z,z') \le 2.
\]

If instead $z,z' \in Q_i^+$, then $z$ and $z'$ are the two clause vertices
of the same positive clause $c_i$. Since $\tau$ satisfies $c_i$, one of the
variables of $c_i$ is set to \textsc{True}; let $x_j$ be such a variable.
Then $v_j \in V_{\mathrm{var}}^+$, and both $z$ and $z'$ are adjacent to $v_j$.
Therefore
\[
z - v_j - z'
\]
is a path of length $2$ in $G[X]$, and thus
\[
\operatorname{dist}_{G[X]}(z,z') \le 2.
\]

Consequently, every two vertices of $G[X]$ are at distance at most $3$, and hence
\[
\operatorname{diam}(G[X]) \le 3.
\]

By symmetry, $\operatorname{diam}(G[Y])\le 3$.

\medskip
\noindent $(\Leftarrow)$
Suppose that $V(G)$ can be partitioned into two sets $(X,Y)$ such that both
$G[X]$ and $G[Y]$ have diameter at most~$3$.
Without loss of generality, assume that $v^+\in X$.

\begin{claim}
$v^-$ belongs to $Y$.
\end{claim}

\begin{proof}
The neighbors of $v^+$ are exactly the vertices of $V_{\mathrm{pair}}^+$, and the
neighbors of $v^-$ are exactly the vertices of $V_{\mathrm{pair}}^-$. Since there are no
edges between $V_{\mathrm{pair}}^+$ and $V_{\mathrm{pair}}^-$, every path from $v^+$ to $v^-$ must pass through at least another vertex.
Hence every such path has length strictly greater than  $3$.   Therefore
$\operatorname{dist}_G(v^+,v^-)\ge 4$, so $v^+$ and $v^-$ cannot lie in the same part of a
$3$-club partition. Since $v^+\in X$, we obtain $v^-\in Y$.
\end{proof}

\begin{claim}
All vertices of $V_{\mathrm{cl}}^+$ belong to $X$.
\end{claim}

\begin{proof}
Suppose that some vertex $z\in V_{\mathrm{cl}}^+$ belongs to $Y$. Since $v^-\in Y$
and $G[Y]$ has diameter at most $3$, we must have $\dist_{G[Y]}(z,v^-)\le 3$.
But this is impossible: from a vertex of $V_{\mathrm{cl}}^+$ one can only move to
a variable vertex or to a vertex of $V_{\mathrm{pair}}^+$, while the only neighbors
of $v^-$ are the vertices of $V_{\mathrm{pair}}^-$ and moreover there exists no edge connecting a vertex in $V_{\mathrm{pair}}^+ \cup V_{\mathrm{cl}}^+$ and one in $V_{\mathrm{pair}}^-$. Hence every path from $z$ to $v^-$
has length strictly greater than $3$. This contradiction shows that $z\notin Y$, and therefore $z\in X$.
\end{proof}

\begin{claim}
Let $c_i\in C^+$ be a positive clause, and let $x,y,z$ be the variable vertices
corresponding to the three variables of $c_i$. Then at least one of $x,y,z$
belongs to $X$.
\end{claim}

\begin{proof}
Let $a_i^+,b_i^+$ be the two clause vertices representing $c_i$. By the previous
claim, both $a_i^+$ and $b_i^+$ belong to $X$. Now the only common neighbors of
$a_i^+$ and $b_i^+$ are precisely the three variable vertices $x,y,z$. If all
three belonged to $Y$, then in $G[X]$ the vertices $a_i^+$ and $b_i^+$ would have
no path of length at most $2$, and in particular they could not be within distance
at most $3$, contradicting that $G[X]$ has diameter at most $3$. Hence at least
one of $x,y,z$ lies in $X$.
\end{proof}

Now define the truth assignment $\tau$ by setting
\[
\tau(x_i)=
\begin{cases}
\textsc{True} & \text{if } v_i\in X,\\
\textsc{False} & \text{if } v_i\in Y.
\end{cases}
\]
By Claim~3, every positive clause contains a variable whose vertex lies in $X$,
and hence a variable set to \textsc{True} by $\tau$. By the symmetric argument,
every negative clause contains a variable whose vertex lies in $Y$, and hence a
variable set to \textsc{False} by $\tau$. Therefore $\tau$ satisfies $\phi$.

\end{proof}

We next extend this hardness result in two steps. First, we show how to pass from partitioning in $3$-clubs to any fixed diameter $s\ge 3$ and $s\neq 4$. Then we show how to lift the number of partitions $k$ from $2$ to any fixed $k\ge 2$.

\begin{lemma}\label{lem:morediameter}
For every odd $s \ge 3$ and every even $s \ge 8$,
\pc{2}{s} is NP-complete on bipartite graphs.
\end{lemma}

\begin{proof}
Membership in NP is immediate. We know by Lemma~\ref{lem:PC(2,3)} that \pc{2}{3} is NP-complete on bipartite graphs so we will reduce from that. Let $G$ be an instance of
\pc{2}{3} on bipartite graphs and fix an integer $s\geq 3$, $s\neq 4, s\neq 6$. We construct in polynomial time a
bipartite graph $G'$ such that $G$ is a yes-instance of \pc{2}{3}
if and only if $G'$ is a yes-instance of $(2,s)$\textsc{-PC}$.$ We distinguish two cases. \\

\noindent\textbf{Case 1: $s$ is odd.}
Then $s=2q+3$ for some integer $q\ge 0$.
Construct $G'$ from $G$ by attaching to every vertex $v\in V(G)$ a pendant path of length $q$ and we denote by $R(v)$ the nodes in this path. Since adding pendant paths preserves bipartiteness, $G'$ is bipartite.

For every vertex $u\in V(G')$, define $\rho(u)\in V(G)$  as the unique  vertex $v\in V(G)$ such that either $u=v$ or $u$ lies on the path $R(v)$ attached to $v$. 

Assume first that $G$ is a yes-instance of \pc{2}{3}, and let
$(X,Y)$ be a partition of $V(G)$ into two $3$-clubs. Set
\[
X' := X\cup \bigcup_{v\in X} R(v),
\qquad
Y' := Y\cup \bigcup_{v\in Y} R(v).
\]
Clearly $(X',Y')$ is a partition of $V(G')$.

We claim that $G'[X']$ is an $s$-club. Let $u,v\in X'$. In $G'[X']$ we can go
from $u$ to $\rho(u)$ along the attached path, then from $\rho(u)$ to
$\rho(v)$ inside $G[X]$, and finally from $\rho(v)$ to $v$ along the
attached path. Hence
\[
\operatorname{dist}_{G'[X']}(u,v)
= \operatorname{dist}_{G'[X']}(u,\rho(u))+\operatorname{dist}_{G[X]}(\rho(u),\rho(v))+ \operatorname{dist}_{G'[X']}(\rho(v),v)
\le 2q+3=s.
\]
Therefore $G'[X']$ has diameter at most $s$, so it is an $s$-club.
By the same argument, $G'[Y']$ is also an $s$-club. Hence $G'$ is a
yes-instance of \pc{2}{s}.

Conversely, assume that $G'$ is a yes-instance of \pc{2}{s}, and let
$(X',Y')$ be a partition of $V(G')$ into two $s$-clubs.

For every $v\in V(G)$, let $v^\star$ denote the leaf of the path attached to $v$,
that is, the vertex of $R(v)$ farthest from $v$. Define
\[
X:=\{v\in V(G): v^\star\in X'\},
\qquad
Y:=\{v\in V(G): v^\star\in Y'\}.
\]
Since each attached leaf belongs to exactly one between $X',Y'$, then  $(X,Y)$ is a partition of $V(G)$.

We show that $G[X]$ is a $3$-club. Let $u,v\in X$, $u\neq v$. Since by definition
$u^\star,v^\star\in X'$ and $G'[X']$ is connected, every $u^\star \mypath v^\star$ path
in $G'[X']$ must pass through $u$ and $v$ (the attached paths are pendant).
Thus
\[
\operatorname{dist}_{G'[X']}(u^\star,v^\star)
=
\operatorname{dist}_{G[X]}(u,v)+2q.
\]
Because $G'[X']$ is an $s$-club and $s=2q+3$, we obtain
\[
\operatorname{dist}_{G[X]}(u,v)
=
\operatorname{dist}_{G'[X']}(u^\star,v^\star)-2q
\le s-2q = 3.
\]
Hence $G[X]$ is a $3$-club. Similarly, $G[Y]$ is a $3$-club.
Therefore $G$ is a yes-instance of \pc{2}{3}.\\

\noindent\textbf{Case 2: $s$ is even.}
Since $s\ge 8$, we have that $s=2q+6$ for some integer
$q\ge 1$. Construct $G'$ by first subdividing every edge $(u,v)$ of $G$ once, adding a node $x_{u,v}$, we denote by  $S$  the set of all such  vertices  and then, similarly as in the previous case,  attaching to every  vertex $v\in V(G)$ a proper path of length $q$. We denote by $R(v)$ the nodes in this path.   Since, both operations preserve bipartiteness $G'$ is bipartite.

Assume first that $G$ is a yes-instance of \pc{2}{3}, and let
$(X,Y)$ be a partition of $V(G)$ into two $3$-clubs.  Define
\[
S_Y:=\{x_{uv}\in S:u,v\in Y\},
\]
 Then set
\[
X' := X\cup \bigcup_{v\in X}R(v)\cup (S\setminus S_Y),
\qquad
Y' := Y\cup \bigcup_{v\in Y}R(v)\cup S_Y.
\]
Clearly $(X',Y')$ is a partition of $V(G')$. For every vertex $u\in X'$, let $\rho(u)$ be a vertex of $X$ at minimum distance from $u$ in $G'$. If there is more than one such vertex, choose $\rho(u)$ arbitrarily among them. 
Notice that 

\[
\operatorname{dist}_{G'[X']}(u,\rho(u))=
\begin{cases}
0, & \text{if } u\in X,\\[1mm]
1, & \text{if } u\in (S \setminus S_Y),\\[1mm]
\operatorname{dist}_{G'[X']}(u,v)\le q, & \text{if } u\in R(v)\text{ for some }v\in X.
\end{cases}
\]

Hence,  $\operatorname{dist}_{G'[X']}(u,\rho(u))\leq q$ as $q\geq 1$.  Moreover, for any two vertices $u, v\in X'$, we have that $\operatorname{dist}_{G'[X']}(\rho(u),\rho(v))= 2 \operatorname{dist}_{G[X]}(\rho(u),\rho(v))\leq 6$ as by construction both $\rho(u),\rho(v) \in X$ and $G[X]$ is a $3$-club.

Let $u, v\in X'$.  We have
\[
\operatorname{dist}_{G'[X']}(u,v)
\leq \operatorname{dist}_{G'[X']}(u,\rho(u))+\operatorname{dist}_{G'[X']}(\rho(u),\rho(v))+ \operatorname{dist}_{G'[X']}(\rho(v),v)
\le 2q+6=s.
\]
Therefore $G'[X']$ has diameter at most $s$, so it is an $s$-club.
By the same argument, $G'[Y']$ is also an $s$-club. Hence $G'$ is a
yes-instance of \pc{2}{s}.

\bigskip
Conversely, assume that $G'$ is a yes-instance of \pc{2}{s}, and let
$(X',Y')$ be a partition of $V(G')$ into two $s$-clubs. 

For every $v\in V(G)$, let $v^\star$ denote the leaf of the path attached to $v$,
that is, the vertex of $R(v)$ farthest from $v$. Define
\[
X:=\{v\in V(G): v^\star\in X'\},
\qquad
Y:=\{v\in V(G): v^\star\in Y'\}.
\]
Since each attached leaf belongs to exactly one between $X',Y'$, then  $(X,Y)$ is a partition of $V(G)$.

We show that $G[X]$ is a $3$-club. Let $u,v\in X$, $u\neq v$. Since by definition
$u^\star,v^\star\in X'$ and $G'[X']$ is connected, every $u^\star \mypath v^\star$ path
in $G'[X']$ must pass through $u$ and $v$ (the attached paths are pendant).
Thus
\[
\operatorname{dist}_{G'[X']}(u^\star,v^\star)
=
\operatorname{dist}_{G'[X']}(u,v)+2q= 2\operatorname{dist}_{G[X]}(u,v) +2q 
\]

\noindent
where the last equality follows by the construction of $G'$ from $G$. Because $G'[X']$ is an $s$-club and $s=2q+6$, we obtain $\operatorname{dist}_{G[X]}(x,y) \leq 3$. Hence $G[X]$ is a $3$-club. Similarly, $G[Y]$ is a $3$-club.
Therefore $G$ is a yes-instance of \pc{2}{3}.
\end{proof}

\begin{theorem}\label{theo:PC(k,s)}
For every fixed $k\ge 2$ and every fixed odd $s\ge 3$ or even $s\geq 8$,
the \pc{k}{s} problem is NP-complete.
\end{theorem}

\begin{proof}
Membership in NP is immediate.   We know by Lemma~\ref{lem:morediameter} that \pc{2}{s} is NP-complete on bipartite graphs so we will reduce from that. If disconnected input graphs are allowed, then the result follows immediately by adding $k-2$ isolated vertices, each of which must form a singleton part in any feasible partition. If one wants to keep the graph connected, we use a slightly different gadget.  

Let $G=(V,E)$ be a bipartite graph. Construct a graph $G'=(V',E')$ as follows:

\begin{itemize}
    \item Choose an arbitrary vertex $v \in V$.
    \item Attach a path $P$ of $(s+1) \cdot (k-2)$ vertices to  $v$ and denote 
    \[
    P = w_1, w_2, \dots, w_{(s+1) \cdot (k-2)}.
    \]
\end{itemize}
The resulting graph $G'$ remains bipartite, and the construction is polynomial.

\noindent
$(\Rightarrow)$ Suppose $G$ admits a $(2,s)$-partition $X,Y$ such that $G[X], G[Y]$ are $s$-clubs. In $G'$ we keep $X$ and $Y$ for the vertices of $V(G)$  and assign consecutive segments of length $s$ (i.e. of $s+1$ vertices) of the path $P$ to $k-2$ partitions.  Each segment has diameter $s$ so all induced subgraphs have diameter at most $s$.   Thus $G'$ admits a $(k,s)$-partition.

\noindent
$(\Leftarrow)$ Conversely, suppose $G'$ admits a $(k,s)$-partition $\mathcal{P}=\{V_1,\dots,V_k\}$.  Consider the path $P$ and its vertices $w_i$ with position $(s+1)i+1$, with $0\leq i \leq k-3$. Any two of these vertices are at a distance greater than $s$ and thus these $k-2$ vertices must belong to different $s$-clubs. Moreover, no vertex of $G$ can be in the same partition with any of these vertices. Indeed, the smallest distance from a vertex of $G$ to these $w_i$ is between $v$ and $w_{(s+1)(k-3)+1}$ which is $s+1$. Thus  $k-2$ sets of the partition are used for $P$, leaving  $2$ sets for the vertices of $G$.  Restricting the partition $\mathcal{P}$ to $V$ gives a $(2,s)$-partition of $G$.

Hence $G$ is a yes-instance of \pc{2}{s} if and only if $G'$ is a yes-instance of \pc{k}{s}.  Therefore  \pc{k}{s} is NP-complete on bipartite graphs
\end{proof}

\section{Hardness of \maxdcc{t}{s} problem}\label{sec:hardness-max-dcc}

\subsection{The \maxdcc{t}{3} problem}
In \cite{Dondi2019} it was proven that \maxdcc{2}{3} can be solved in polynomial time and the complexity of \maxdcc{3}{3} was posed as an open problem. In this section we prove that \maxdcc{t}{3} is APX-hard for any fixed $t \geq 8$, even in bipartite graphs. We will use the following problem, which was proven to be NP-hard to approximate in \cite{chlebik2003}.
\begin{np-hard-problem}
{Maximum $2$ Bounded $3$-Dimensional Matching problem (Max-2B3DM)}{A set $M \subseteq X \times Y \times Z$ of ordered triples where  $X$, $Y$ and $Z$ are disjoint sets and the number of
occurrences in $M$ of an element in $X$, $Y$ or $Z$ is bounded by  constant $2$.}{The largest matching $M' \subseteq M$, that is, the largest  subset such that no two elements of $M$ agree in any coordinate.}  
\end{np-hard-problem}
To prove that \maxdcc{t}{3} for \(t \geq 8\) is NP-hard, and to derive its
inapproximability bound, we give an \(L\)-reduction from Max-2B3DM.
To prove that \maxdcc{t}{3} for \(t \geq 8\) is NP-hard, and to derive its
inapproximability bound, we give an \(L\)-reduction from Max-2B3DM.

To make the paper self-contained, we recall the definition of an \(L\)-reduction
and the approximation-preserving property that we will use. For more details on
\(L\)-reductions, see \cite{williamson2011design}.

Given an instance \(I\) of an optimization problem \(A\), let \(opt(I)\) denote the
value of an optimal solution for \(I\).

\begin{definition}
Let \(A\) and \(B\) be two optimization problems with objective functions \(c_1\) and
\(c_2\), respectively. We say that there is an \(L\)-reduction from \(A\) to \(B\) if there
exist constants \(\alpha,\beta>0\) such that:
\begin{enumerate}
    \item for each instance \(I_1\) of \(A\), one can compute in polynomial time an
    instance \(I_2\) of \(B\);
    \item if \(I_1\) is an instance of \(A\) and \(I_2\) is the corresponding instance of \(B\),
    then
    \[
    opt(I_2)\le \alpha\,opt(I_1);
    \]
    \item given any feasible solution \(s_2\) for \(I_2\), one can compute in polynomial
    time a feasible solution \(s_1\) for \(I_1\) such that
    \[
    \bigl|opt(I_1)-c_1(s_1)\bigr|
    \le
    \beta\,\bigl|opt(I_2)-c_2(s_2)\bigr|.
    \]
\end{enumerate}
\end{definition}
For maximization problems, we say that an algorithm is an \(r\)-approximation, with
\(r\ge 1\), if it always returns a feasible solution of value at least \(opt(I)/r\). We will use the following theorem, which is equivalent to Theorem~16.5 in
\cite{williamson2011design}.


\begin{theorem}\label{theo:app}
Suppose there is an \(L\)-reduction with parameters \(\alpha\) and \(\beta\) from a
maximization problem \(A\) to a maximization problem \(B\). If there is an
\(r\)-approximation algorithm for \(B\), then there is a
$
\frac{1}{1-\alpha\beta\left(1-\frac{1}{r}\right)}$-approximation algorithm for \(A\).
\end{theorem}

We begin by describing the reduction. Let $M = \{C_1, C_2, \ldots, C_m\}$ be an instance of the Max-2B3DM problem, where each $C_i$ is an ordered triple in $X \times Y \times Z$. Fix a constant $t\geq 8$. We construct a bipartite graph  $G_{M,t,3}=(V_1,V_2,E)$, as follows:
\begin{flalign*}
V_1= & X\cup Y \cup Z  \cup\{a_i \mid i \in [m] \} \\
V_2= &\bigl\{c_i \mid  i \in [m] \bigr\} \cup \bigl\{h_{i,j} \mid  i \in [m],  j \in [t-5] \bigr\} \\
E= & \Bigl\{(c_i,a_i)|\,\,i \in [m] \Bigr\} \bigcup \Bigl\{(a_i,h_{i,j}) | \,\, i \in [m],  j \in [t-5] \Bigr\} \bigcup_{i\in [m]} E_{C_i}
\end{flalign*}

where $E_{C_i}=\{(c_i, x), (c_i, y), (c_i, z)\}$ for each triple  $C_i=(x,y,z)$ in $M$. As an example see Figure  \ref{fig1}.

\begin{figure}[ht]
    \centering
\begin{tikzpicture}[
  vertex/.style={draw,circle,minimum size=6.2mm,inner sep=0pt,line width=0.9pt},
  edge/.style={line width=0.9pt}
]

\node[vertex,label=below:$x_1$] (x1) at (0.0,0.0) {};
\node[vertex,label=below:$x_2$] (x2) at (1.5,0.0) {};
\node[vertex,label=below:$y_1$] (y1) at (3.2,0.0) {};
\node[vertex,label=below:$y_2$] (y2) at (5.0,0.0) {};
\node[vertex,label=below:$z_1$] (z1) at (6.8,0.0) {};
\node[vertex,label=below:$z_2$] (z2) at (8.6,0.0) {};

\node[vertex,label=left:$c_1$] (c1) at (1.0,1.8) {};
\node[vertex,label=left:$c_2$] (c2) at (3.8,1.8) {};
\node[vertex,label=left:$c_3$] (c3) at (7.5,1.8) {};

\node[vertex,label=left:$a_1$] (a1) at (1.0,3.4) {};
\node[vertex,label=left:$a_2$] (a2) at (3.85,3.35) {};
\node[vertex,label=left:$a_3$] (a3) at (7.45,3.3) {};

\node[vertex,label=above:$h_{1,1}$] (h11) at (-0.05,4.55) {};
\node[vertex,label=above:$h_{1,2}$] (h12) at (1.05,4.55) {};
\node[vertex,label=above:$h_{1,3}$] (h13) at (2.15,4.55) {};

\node[vertex,label=above:$h_{2,1}$] (h21) at (3.25,4.48) {};
\node[vertex,label=above:$h_{2,2}$] (h22) at (4.00,4.48) {};
\node[vertex,label=above:$h_{2,3}$] (h23) at (4.95,4.48) {};

\node[vertex,label=above:$h_{3,1}$] (h31) at (6.55,4.4) {};
\node[vertex,label=above:$h_{3,2}$] (h32) at (7.55,4.4) {};
\node[vertex,label=above:$h_{3,3}$] (h33) at (8.45,4.4) {};

\draw[edge] (a1) -- (h11);
\draw[edge] (a1) -- (h12);
\draw[edge] (a1) -- (h13);
\draw[edge] (a1) -- (c1);

\draw[edge] (a2) -- (h21);
\draw[edge] (a2) -- (h22);
\draw[edge] (a2) -- (h23);
\draw[edge] (a2) -- (c2);

\draw[edge] (a3) -- (h31);
\draw[edge] (a3) -- (h32);
\draw[edge] (a3) -- (h33);
\draw[edge] (a3) -- (c3);

\draw[edge] (c1) -- (x1);
\draw[edge] (c1) -- (y1);
\draw[edge] (c1) -- (z1);

\draw[edge] (c2) -- (x2);
\draw[edge] (c2) -- (y1);
\draw[edge] (c2) -- (z1);

\draw[edge] (c3) -- (x1);
\draw[edge] (c3) -- (y2);
\draw[edge] (c3) -- (z2);
\end{tikzpicture}
\caption{The graph $G_{M,8,3}$  obtained when the instance of  Max-2B3DM  problem is  the set $M=\{(x_1, y_1, z_1), \,\,(x_2, y_1, z_1)\,\,(x_1, y_2, z_2)\}$.}
\label{fig1}
\end{figure}

\begin{claim}\label{claim}
The graph $G_{M,t,3}$ can be constructed in polynomial time, is bipartite and has maximum degree $t-4$. 
\end{claim}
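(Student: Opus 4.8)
The plan is to verify the three assertions directly from the explicit description of $G_{M,t,3}$, treating $t$ as the fixed constant it is. For polynomial-time constructibility, I would count vertices and edges: $|V_1| = |X|+|Y|+|Z|+m$ and $|V_2| = m + m(t-5)$, while $E$ consists of the $m$ edges $(c_i,a_i)$, the $m(t-5)$ edges $(a_i,h_{i,j})$, and the $3m$ edges coming from the sets $E_{C_i}$. Since $t$ is a fixed constant and $m$ together with $|X|,|Y|,|Z|$ is bounded by the size of the Max-2B3DM instance, the total number of vertices and edges is polynomial in the input size, and all of them can be listed explicitly in polynomial time.

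For bipartiteness, I would simply check that every edge of $G_{M,t,3}$ has one endpoint in $V_1$ and one in $V_2$: each edge $(c_i,a_i)$ joins $c_i\in V_2$ to $a_i\in V_1$; each edge $(a_i,h_{i,j})$ joins $a_i\in V_1$ to $h_{i,j}\in V_2$; and each edge of $E_{C_i}$ joins $c_i\in V_2$ to one of $x\in X$, $y\in Y$, $z\in Z$, all of which lie in $V_1$. Hence $V_1$ and $V_2$ are independent sets and $G_{M,t,3}=(V_1,V_2,E)$ is bipartite by definition.

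For the degree bound I would go through the vertex types one by one. An element of $X$ (and likewise of $Y$ or $Z$) is adjacent only to those $c_i$ for which it is the corresponding coordinate of $C_i$; since the Max-2B3DM instance bounds the number of occurrences of each element by $2$, such a vertex has degree at most $2$. A vertex $c_i$ is adjacent to $a_i$ and to the three coordinates of $C_i$, hence has degree $4$. A vertex $h_{i,j}$ is adjacent only to $a_i$, hence has degree $1$. Finally $a_i$ is adjacent to $c_i$ and to $h_{i,1},\dots,h_{i,t-5}$, so $\deg(a_i)=t-4$. The maximum over all types is therefore $\max\{2,4,1,t-4\}$, which equals $t-4$ precisely because $t\ge 8$ gives $t-4\ge 4$; thus the maximum degree of $G_{M,t,3}$ is $t-4$.

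I do not expect a genuine obstacle — the statement is essentially bookkeeping once the construction is written out — but the two points to handle carefully are invoking the ``bounded by $2$'' hypothesis of Max-2B3DM in exactly the right place (the degrees of the $X$, $Y$, $Z$ vertices) and noting that it is the assumption $t\ge 8$ that makes $t-4$ dominate the constant $4$ contributed by the $c_i$ vertices.
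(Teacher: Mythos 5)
Your proposal is correct and follows essentially the same route as the paper's proof: polynomial-time constructibility from the explicit vertex/edge counts, bipartiteness because every edge joins $V_1$ to $V_2$, and the degree bound by checking each vertex type ($c_i$ of degree $4$, $a_i$ of degree $t-4$, $h_{i,j}$ of degree $1$, and degree $2$ for elements of $X\cup Y\cup Z$ via the occurrence bound of Max-2B3DM). Your explicit remark that $t\ge 8$ makes $t-4\ge 4$ dominate is a small but welcome point that the paper leaves implicit.
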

\begin{proof}
It is easy to see that $G_{M,t,3}$ can be constructed in polynomial time.  Then by construction  $V_1$ and $V_2$ form a bipartition as there are no edges within the sets $V_1$ and $V_2$. Furthermore, for every $i \in [m]$, the vertex $c_i$ has degree 4, $a_i$ has degree $t-4$, and for every $j \in [t-5]$, the vertex $h_{i,j}$ has degree 1. Finally, every vertex $u \in X \cup Y \cup Z$ has degree 2, by the definition of the Max-2B3DM problem. 
\end{proof} 

We need the following lemma.
\begin{lemma}\label{lem:tre}  In the graph $G_{M,t,3}$, all $3$-clubs are of size at most $t$, with $t\geq 8$, and the only $3$-clubs of size exactly $t$ are of the form $N[c_i] \cup \{h_{i,j} \mid j \in [t-5]\}$ for $1 \leq i \leq m$.
\end{lemma}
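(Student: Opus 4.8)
### Proof Proposal for Lemma~\ref{lem:tre}

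The plan is to analyze the local structure of $G_{M,t,3}$ around each possible ``center'' of a $3$-club and show that the only way to accumulate $t$ vertices within diameter $3$ is by taking the closed neighborhood of some $c_i$ together with all its pendant vertices $h_{i,j}$. First I would record the key degree facts from Claim~\ref{claim}: each $h_{i,j}$ has degree $1$ (its only neighbor is $a_i$), each $u \in X \cup Y \cup Z$ has degree $2$, each $c_i$ has degree $4$ (neighbors $a_i$ and the three coordinate vertices of $C_i$), and each $a_i$ has degree $t-4$ (neighbors $c_i$ and $h_{i,1},\dots,h_{i,t-5}$). I would also note the bipartite structure: $V_1 = X\cup Y\cup Z\cup\{a_i\}$ and $V_2 = \{c_i\}\cup\{h_{i,j}\}$, so any path alternates between the two sides.

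The main step is a case analysis on how a $3$-club $S$ can be ``anchored.'' Since $S$ has diameter at most $3$, fix any edge $(u,v)$ of $G'[S]$ with $u \in V_1$, $v \in V_2$; every other vertex of $S$ is within distance $3$ of both $u$ and $v$. A clean way to organize this: every vertex $w\in S$ lies within distance $2$ of one fixed side-vertex. Concretely, I would argue that $S$ must ``live near'' a single $c_i$: because the only vertices of degree larger than $2$ are the $a_i$'s and $c_i$'s, and because $d(c_i,c_j)\ge 2(s-1) = 4 > 3$ for $i\neq j$ (as already established in the proof of Lemma~\ref{lem:partition_reduction}, the shortest $c_i$--$c_j$ path goes through a shared coordinate vertex and has length $4$), no two distinct color-vertices can be in $S$. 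Likewise, I would check that $a_i$ and $a_j$ are at distance $\ge 4$ for $i \neq j$ (shortest path $a_i - c_i - u - c_j - a_j$ has length $4$, since the $h$'s are pendant and the coordinate vertices only connect $c$'s), so at most one $a_i$ lies in $S$; and an $h_{i,j}$ forces $a_i\in S$ (its only neighbor), pinning the index $i$.

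Once the index $i$ is essentially forced, I would bound $|S|$ by listing exactly which vertices can be within distance $3$ of the relevant anchor. The candidate vertices are: $c_i$, $a_i$, the three coordinate vertices $x,y,z$ of $C_i$, the pendants $h_{i,1},\dots,h_{i,t-5}$, and possibly a few ``second-hop'' vertices such as another color vertex $c_j$ sharing a coordinate with $C_i$, or coordinate vertices of such a $c_j$. I would show that including any such extra vertex $c_j$ is impossible without violating the diameter bound (e.g. $c_j$ and $a_i$ would be at distance $\ge 4$, or $c_j$ and some $h_{i,\ell}$ at distance $\ge 4$), and similarly that one cannot simultaneously keep all $h_{i,j}$ and reach outside $N[c_i]$. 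This yields $|S| \le |N[c_i]| + (t-5) = 5 + (t-5) = t$, with equality forcing $S = N[c_i] \cup \{h_{i,j} \mid j\in[t-5]\}$; I would also verify this set is indeed a $3$-club (the coordinate vertices are at distance $2$ from $a_i$ via $c_i$, at distance $3$ from the $h_{i,j}$, and pairwise at distance $2$; and any two $h_{i,j},h_{i,\ell}$ are at distance $2$ via $a_i$).

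The main obstacle I anticipate is the bookkeeping in the equality case: ruling out every alternative way to reach size $t$ that does not use all pendants --- for instance, a $3$-club centered more ``towards'' the coordinate vertices that picks up several neighboring color-vertices $c_j$ and their pendants. The resolution is that each such $c_j$ brings at most a bounded contribution ($c_j$ itself plus its coordinate vertices, but \emph{not} $a_j$ or the $h_{j,\cdot}$'s, which would be at distance $\ge 4$ from anything anchored near $c_i$), and the degree-$2$ bound on coordinate vertices (from $2$-boundedness of Max-2B3DM) severely limits how many such $c_j$ are reachable; a short counting argument shows no such configuration reaches $t$ vertices when $t \ge 8$. I would present the distance computations in a compact table-free list of the handful of vertex-pair types and their minimum distances, then conclude.
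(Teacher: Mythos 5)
Your anchoring step rests on a false distance claim that undermines the structure of the whole argument. You assert that no two color vertices can lie in a common $3$-club because $d(c_i,c_j)\ge 2(s-1)=4>3$, citing the computation from the proof of Lemma~\ref{lem:partition_reduction}. But that computation concerns the graph $G'$ of the PC($k,s$) reduction, in which distinct color vertices are only linked through auxiliary paths of length $s-1$; it says nothing about $G_{M,t,3}$. In $G_{M,t,3}$ each $c_i$ is adjacent to the coordinate vertices of its triple, so if $C_i$ and $C_j$ share an element $u$ (allowed, since each element may occur in two triples) then $c_i - u - c_j$ is a path of length $2$, and $3$-clubs containing two or even three color vertices do exist: for $M=\{(x_1,y_1,z_1),(x_2,y_1,z_1)\}$ the set $\{c_1,c_2,x_1,y_1,z_1\}$ induces a subgraph of diameter $3$. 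Hence the index $i$ is not ``pinned'' by a unique $c_i$ in $S$, and your later remark that an extra $c_j$ ``is impossible without violating the diameter bound'' is likewise not true in general.

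As a consequence, the genuinely hard part of the lemma --- showing that any $3$-club containing at least two color vertices has at most $7<t$ vertices --- is precisely what you defer to ``a short counting argument'' in your final paragraph, and that counting is where the paper's proof does its work: such a club contains no vertex of $H$ (an $h$-vertex would be at distance $\ge 4$ from one of the two $c$'s); it contains at most $3$ vertices of $C$ (four $c$'s would need two coordinate vertices serving disjoint pairs, and these, having degree $2$ with disjoint closed neighborhoods, are at distance $\ge 4$); it contains at most $4$ vertices of $W$ (five degree-$2$ vertices would require $10$ edges into at most three $c$'s of degree $3$ each, so some $W$-vertex has degree $1$ in the induced subgraph and is then too far from one of the $c$'s); and it contains at most one $a_i$, whose presence further caps $|S\cap W|\le 3$; in all cases $|S|\le 7$. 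Your verification that $N[c_i]\cup\{h_{i,j}\mid j\in[t-5]\}$ is a $3$-club of size $t$, and your handling of the single-$c_i$ case (pendant $h$'s force $a_i$, $W$-vertices have all neighbors in $C$, no $a_j$ or $h_{j,\cdot}$ with $j\ne i$ can attach), are fine, but as written the proposal both relies on a false statement and omits the core multi-center counting, so it does not establish the lemma.
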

\begin{proof}
To prove this lemma, we will show that any $3$-club in $G_{M,t,3}$ either contains exactly $t$ vertices or has at most 7 vertices (with $7 < t$). Clearly the subgraphs of $G_{M,t,3}$ induced by $N[c_i]\cup \{h_{i,j} |\,\, j \in [t-5] \}$ for $1\leq i\leq m$ are $3$-clubs of  size $t$.  To simplify the analysis, we classify the vertices in $G_{M,t,3}$ into four distinct sets: the vertices in $W = X \cup Y \cup Z$, the vertices in $C = \{c_1, \ldots, c_{m}\}$, the vertices in $A = \{a_1, \ldots, a_{m}\}$, and the vertices in $H = \{h_{i,j} \mid i \in [m],\, j \in [t-5]\}$.

Let $S$ be a  $3$-club in $G_{M,t,3}$. We will now show that if $S$ contains at least two vertices from $C$ then  $|S|\leq 7<t$. We make the following considerations regarding the composition of the vertices in $S$.
  
\begin{itemize} 
\item {\em the number of vertices in $S \cap H$ is zero.} Note that such $3$-club cannot include any vertices from $H$, as a vertex in $H$ would be at a distance of at least 4 from one of the two vertices in $C$. 

\item {\em the number of vertices in  $S \cap C$ is at most  $3$.} Suppose on the contrary there are $4$ vertices from $C$, say  $c_1,c_2,c_3$ and  $c_4$, For $S$ to be a $3$-club, there must be a vertex from $S\cap W$ adjacent to each pair of the $c_i$ ($i\in [4]$) vertices.  Let  $u$ be the vertex $S \cap W$ adjacent to  $c_1$ and  $c_2$, and  $v$ be the vertex from $S\cap W$ adjacent   $c_3$ and  $c_4$. The vertices $u$ and  $v$ are at distance at least $4$ in  $G_{M,t,3}[S]$ because $N[u] \cap N[v] = \emptyset$ as all vertices from $W$ have degree $2$.  This contradicts the assumption that $S$ is a $3$-club.

\item {\em The number of vertices in  $ S\cap W $ is at most $4$.} 
Suppose on the contrary that there are 5 such vertices, $U = \{u_1, u_2, u_3, u_4, u_5\}$. At least one vertex $u \in U$ must have degree 1 in $G_{M,t,3}[S]$. Otherwise, if all vertices in $U$ had degree 2, there would be 10 edges between these 5 vertices and the vertices in $S \cap C$. However, as shown in the previous point, there can be at most 3 vertices in $S \cap C$, and since each of these vertices has degree 3, this leads to a total of only 9 edges between $U$ and $S \cap C$. Finally notice that $u$ is at distance at least four from one of the vertices $S \cap C$  (recall that in $C$ there are at least two vertices)  contradicting the fact that $S$ is a $3$-club.
\end{itemize}
Given the above, if $S \cap A = \emptyset$ then trivially the $3$-club cannot have more than $7$ vertices. Now assume that  $S \cap A \neq \emptyset$.  Notice that it must hold $|S \cap A| = 1$, since any two  vertices in $S \cap A$  would be at distance at least $4$ from each other.  Let  $a_i \in S \cap A$, then the vertices in $W$ that $a_i$ can reach with a path of length at most three are the ones adjacent to $c_i$. Thus $|S\cap W| \leq 3$ and the total number of vertices in the $3$-club remains bounded by $1+3+3=7$. 
\end{proof}

\begin{theorem}\label{theo:APX3} 
Let $t$ be a constant with $t \geq 8$. It is NP-hard to approximate the solution of \maxdcc{t}{3}  within a factor of $\frac{95}{94}$, even for bipartite graphs with a constant maximum degree of $t-4$.
\end{theorem}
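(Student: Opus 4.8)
The plan is to exhibit the construction $M\mapsto G_{M,t,3}$ from above, together with the natural correspondence between $(t,3)$-clubs and triples, as an $L$-reduction from Max-2B3DM to MAX-DCC($t,3$) with parameters $\alpha=t$ and $\beta=1/t$, and then to conclude by combining Theorem~\ref{theo:app} with the inapproximability of Max-2B3DM established in \cite{chlebik2003}.

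The first step is to pin down exactly what the feasible solutions of MAX-DCC($t,3$) on $G_{M,t,3}$ look like. A $(t,3)$-club has at least $t$ vertices by definition, so by Lemma~\ref{lem:tre} it has exactly $t$ vertices and must equal $B_i:=N[c_i]\cup\{h_{i,j}\mid j\in[t-5]\}$ for some $i\in[m]$. Hence every feasible solution is a subfamily $\{B_i\mid i\in I\}$ with $I\subseteq[m]$, covering exactly $t|I|$ vertices. Next I would record that for $i\neq i'$ the only vertices $B_i$ and $B_{i'}$ can share lie in $X\cup Y\cup Z$ — the vertices $c_\ell$, $a_\ell$, $h_{\ell,j}$ all carry the index $\ell$ and are thus private to $B_\ell$ — and that $B_i\cap(X\cup Y\cup Z)=\{x,y,z\}$ whenever $C_i=(x,y,z)$. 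Therefore $\{B_i\mid i\in I\}$ is pairwise disjoint if and only if $\{C_i\mid i\in I\}$ is a matching of $M$. This yields a bijection between the feasible solutions of MAX-DCC($t,3$) on $G_{M,t,3}$ and the matchings of $M$ under which a collection of $k$ clubs (covering $kt$ vertices) corresponds to a matching of size $k$; in particular $opt(G_{M,t,3})=t\cdot opt(M)$.

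Given this correspondence, the three $L$-reduction conditions are essentially immediate: Claim~\ref{claim} supplies the polynomial-time construction; $opt(G_{M,t,3})=t\cdot opt(M)$ gives condition (2) with $\alpha=t$; and mapping a feasible collection $\{B_i\mid i\in I\}$ back to the matching $\{C_i\mid i\in I\}$ gives condition (3) with $\beta=1/t$, since $\bigl|opt(M)-|I|\bigr|=\tfrac{1}{t}\bigl|opt(G_{M,t,3})-t|I|\bigr|$. As $\alpha\beta=1$, Theorem~\ref{theo:app} turns any $\rho$-approximation for MAX-DCC($t,3$) into a $\bigl(1-\alpha\beta(1-\rho)\bigr)=\rho$-approximation for Max-2B3DM (indeed, the exact scalings $c_2=t\,c_1$ and $opt(I_2)=t\,opt(I_1)$ make this visible directly, without invoking the general theorem). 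Since \cite{chlebik2003} shows it is NP-hard to approximate Max-2B3DM within $\tfrac{95}{94}$, the same hardness transfers to MAX-DCC($t,3$); and by Claim~\ref{claim} the instances produced are bipartite with maximum degree $t-4$, which gives the claimed restriction on the instance class.

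The only genuinely substantive ingredient is Lemma~\ref{lem:tre}, which is already established; everything else is bookkeeping. The two points that need care are: first, that the family of feasible solutions of MAX-DCC($t,3$) on $G_{M,t,3}$ is truly exhausted by the $B_i$'s — this is exactly where the "size \emph{exactly} $t$" clause of Lemma~\ref{lem:tre} is used, and it prevents an approximation algorithm from exploiting clubs of a different shape — and second, that the disjointness/matching equivalence is tight in both directions, so that optima and solution values scale by precisely the factor $t$ (any slack here would degrade the constant in the final inapproximability bound).
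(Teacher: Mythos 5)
Your proposal is correct and follows essentially the same route as the paper: the same construction $G_{M,t,3}$, the same use of Lemma~\ref{lem:tre} to force every $(t,3)$-club to be one of the $N[c_i]\cup\{h_{i,j}\}$ blocks, the same disjointness-equals-matching correspondence, and the same $L$-reduction with $\alpha=t$, $\beta=1/t$ combined with Theorem~\ref{theo:app} and the $\tfrac{95}{94}$-hardness of Max-2B3DM. Your explicit observation that the blocks $B_i$ and $B_{i'}$ can only intersect inside $X\cup Y\cup Z$ is a slightly more detailed bookkeeping of the cover-to-matching direction than the paper spells out, but it is not a different argument.
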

\begin{proof}
Let $G_{M,t,3}$ be the graph obtained from an instance $M$ of Max-2B3DM, for a given fixed $t$, with $t\geq 8$.  By Claim~\ref{claim} we have that  $G_{M,t,3}$ can be constructed in polynomial time, has maximum degree of $t-4$ and is bipartite. We prove now that from a matching of $k\geq 1$ triples in $M$, we can always obtain, in polynomial time, a disjoint cover with $(t,3)$-clubs that covers $t\cdot k$ vertices in  $G_{M,t,3}$. Moreover, for every disjoint cover of $G_{M,t,3}$ with  $(t,3)$-clubs that covers $k$ vertices in $G_{M,t,3}$ it is possible to obtain in polynomial time a matching of $\frac{k}{t}$  triples in $M$ (note that by Lemma \ref{lem:tre} we have that $k$ is a multiple of $t$).

Let $S=\{C_1,\ldots C_k\}$ be a subset of $k$ triples from $M$ that correspond to a matching. We define for each $i \in [k]$ the following set in $G_{M,t,3}$.
$$
S_i'=\{x,y,z,c_i,a_i, h_{i,1}\ldots h_{i,t-5}\}
$$
Notice that $S_i'$ is obviously a  $3$-club of size $t$ and, since the triples  in $S$ form a matching, it follows that $S_i' \cap S_j' =\emptyset$ for any $i\neq j$. Thus, $S_1', S_2', \ldots, S_k'$ form a disjoint cover with $(t,3)$-clubs that covers $t\cdot k$ vertices of $G_{M,t,3}$.

Let  now  $S_1, S_2, \ldots, S_r$ be a disjoint cover with $(t,3)$-clubs that covers $k$ vertices in $G_{M,t,3}$. By Lemma~\ref{lem:tre}, we know that this cover is the union of  $\frac{k}{t}$ disjoint $3$-clubs, each containing $t$ vertices of the form $N[c_i]\cup \{h_{i,j} |\,\, j \in [t-5] \}$ for some $i$, $1\leq i\leq |M|$ where $c_i$ corresponds to a triple   $(x,y,z)$ in $M$.  To obtain the  $\frac{k}{t}$ triples in $M$ corresponding to a matching, it is enough to take the triples corresponding to the  $\frac{k}{t}$ disjoint $3$-clubs from $S$.

We therefore have an $L$-reduction with $\alpha=t$ and $\beta=\frac{1}{t}$ from  Max-2B3DM to  \maxdcc{t}{3}. 
In \cite{chlebik2006}, it was proven that for Max-2B3DM, achieving an approximation factor better than $\frac{95}{94}$ is NP-hard. 
Therefore we deduce the same result for \maxdcc{t}{3}  (see \cite{williamson2011design} for more information about $L$-reductions and in particular Theorem~\ref{theo:app}. 
\end{proof}

\subsection{The \maxdcc{t}{2}  problem}\label{subsec:max_t2}
In \cite{Dondi2019} it was proven that $MAX$-$DCC(3,2)$ is APX-hard and  the complexity of $Max$-$DCC(2,2)$ was posed as an open problem. In this section we prove that $MAX$-$DCC(t,2)$ is APX-hard for any fixed $t \geq 5$, even in bipartite graphs and that $MAX$-$DCC(2,2)$ can be solved in polynomial time.

In order to prove that $DCC(t,2)$ problem with $t\geq 5$  is APX-hard, we present an $L$-reduction from Max- 2B3DM. We begin by describing the reduction. Let $M=\{C_1,C_2\,\ldots C_m\}$ be an instance of Max- 2B3DM where each $C_i$ is an ordered triple in $X\times Y\times Z$.  

Fix a constant $t$ with $t\geq 5$. We construct a bipartite graph  $G_{M,t,2}=(V_1,V_2,E)$,  as follows:

\begin{flalign*}
V_1= & X\cup Y \cup Z  \cup \{h_{i,j}| 1\in [m],\,\,j \in  [t-4]  \}\\
V_2= &\{c_i| 1\leq i\leq |M|\} \\
E= & \bigl\{(c_i,h_{i,j})|\,\,i \in [m]\,\,j \in [t-4] \bigr\} \bigcup_{i\in [m]} E_{C_i}
\end{flalign*}
where $E_{C_i}=\bigl\{(c_i,\,x), \, (c_i,\,y),\, (c_i,\,z)\bigr\}$ for each triple  $C_i=(x,y,z)$ in $M$.

As an example see Figure  \ref{fig2}.

\begin{figure}[t]
    \centering
    \begin{tikzpicture}[
      vertex/.style={draw,circle,minimum size=6.2mm,inner sep=0pt,line width=0.9pt},
      edge/.style={line width=0.9pt}
    ]

    \node[vertex,label=below:$x_1$] (x1) at (0.0,0.0) {};
    \node[vertex,label=below:$x_2$] (x2) at (1.6,0.0) {};
    \node[vertex,label=below:$y_1$] (y1) at (3.2,0.0) {};
    \node[vertex,label=below:$y_2$] (y2) at (4.8,0.0) {};
    \node[vertex,label=below:$z_1$] (z1) at (6.4,0.0) {};
    \node[vertex,label=below:$z_2$] (z2) at (8.0,0.0) {};

    \node[vertex,label=left:$c_1$] (c1) at (0.8,1.6) {};
    \node[vertex,label=above left:$c_2$] (c2) at (3.9,1.6) {};
    \node[vertex,label=above left:$c_3$] (c3) at (7.0,1.6) {};

    \node[vertex,label=above:$h_{1,1}$] (h11) at (0.8,3.0) {};
    \node[vertex,label=above:$h_{2,1}$] (h21) at (3.95,2.95) {};
    \node[vertex,label=above:$h_{3,1}$] (h31) at (6.95,2.9) {};

    \draw[edge] (h11) -- (c1);
    \draw[edge] (h21) -- (c2);
    \draw[edge] (h31) -- (c3);

    \draw[edge] (c1) -- (x1);
    \draw[edge] (c1) -- (y1);
    \draw[edge] (c1) -- (z1);

    \draw[edge] (c2) -- (x2);
    \draw[edge] (c2) -- (y1);
    \draw[edge] (c2) -- (z1);

    \draw[edge] (c3) -- (x1);
    \draw[edge] (c3) -- (y2);
    \draw[edge] (c3) -- (z2);

    \end{tikzpicture}
\caption{The graph $G_{M,5,2}$  obtained when the instance of  Max-2B3DM  problem is  the set $M=\{(x_1, y_1, z_1), \,\,(x_2, y_1, z_1)\,\,(x_1, y_2, z_2)\}$.}
\label{fig2}
\end{figure}

\begin{claim}\label{claim2}
The graph $G_{M,t,2}$ can be constructed in polynomial time, is bipartite and has maximum degree $t-1$. 
\end{claim}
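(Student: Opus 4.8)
The plan is to follow the same template as the proof of Claim~\ref{claim}, checking the three assertions in turn for the graph $G_{M,t,2}$.

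For polynomial-time constructibility I would observe that $V_1$ has $|X|+|Y|+|Z|+m(t-4)$ vertices and $V_2$ has $m$ vertices, while $E$ consists of the $m(t-4)$ edges $(c_i,h_{i,j})$ together with the $3m$ edges coming from the sets $E_{C_i}$; since $t$ is a fixed constant, all of these can be listed directly from the description of $M$ in time polynomial in $|M|$. For bipartiteness I would verify that every edge of $E$ joins a vertex of $V_1$ to a vertex of $V_2$: each edge $(c_i,h_{i,j})$ has $c_i\in V_2$ and $h_{i,j}\in V_1$, and each edge of $E_{C_i}$ joins $c_i\in V_2$ to one of $x,y,z\in X\cup Y\cup Z\subseteq V_1$. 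Hence there are no edges inside $V_1$ and none inside $V_2$, so $(V_1,V_2)$ is a valid bipartition.

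For the degree bound I would compute the degree of each vertex. A vertex $c_i$ is adjacent to the $t-4$ vertices $h_{i,1},\dots,h_{i,t-4}$ and to the three coordinates of its triple, for a total degree of $(t-4)+3=t-1$. Each $h_{i,j}$ is adjacent only to $c_i$ and so has degree $1$. Finally, each $w\in X\cup Y\cup Z$ is adjacent only to those $c_i$ for which $w$ occurs in the triple $C_i$, and by the defining $2$-bounded-occurrence property of Max-2B3DM this happens for at most two indices $i$, so $\deg(w)\le 2$. Since $t\ge 5$ we have $t-1\ge 4>2$, hence the maximum degree of $G_{M,t,2}$ is exactly $t-1$.

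There is no genuine obstacle here: the statement is a routine structural check. The only places where the hypotheses are actually used are the $2$-bounded-occurrence condition of Max-2B3DM, needed to bound the degrees of the vertices of $X\cup Y\cup Z$, and the inequality $t\ge 5$, needed to ensure that the degree $t-1$ of the vertices $c_i$ dominates the other degrees.
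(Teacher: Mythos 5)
Your proof is correct and follows essentially the same route as the paper's: immediate polynomial-time constructibility, bipartiteness by noting every edge goes between $V_1$ and $V_2$, and the degree count $\deg(c_i)=(t-4)+3=t-1$, $\deg(h_{i,j})=1$, $\deg(w)\le 2$ via the $2$-bounded-occurrence condition. Your explicit check that $t\ge 5$ makes $t-1$ the maximum is a slightly more careful finishing touch than the paper's, but the argument is the same.
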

\begin{proof}
It is easy to see that $G_{M,t,2}$ can be constructed in polynomial time.  Then by construction  $V_1$ and $V_2$ form a bipartition as there are no edges within the sets $V_1$ and $V_2$. Furthermore, for every $i \in [m]$, the vertex $c_i$ has degree $t-1$ and for every $j \in [t-4]$, the vertex $h_{i,j}$ has degree 1. Finally, every vertex $u \in X \cup Y \cup Z$ has degree 2, by the definition of the Max-2B3DM problem. 
\end{proof}

\begin{lemma}\label{due} In the graph $G_{M,t,2}$, all the $2$-clubs are of size at most $t$, with $t\geq 5$ and the only $2$-clubs of size exactly $t$ are of the form  $N[c_i]$ for $1\leq i\leq |M|$.
\end{lemma}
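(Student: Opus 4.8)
The plan is to mirror the structure of the proof of Lemma~\ref{lem:tre}, but the analysis is considerably simpler because $2$-clubs have diameter at most $2$, so they are very tightly constrained. First I would note that each $N[c_i]$ is indeed a $2$-club: every vertex in $N[c_i]$ is adjacent to $c_i$, so any two of them are at distance at most $2$ through $c_i$; and by Claim~\ref{claim2} the vertex $c_i$ has degree $t-1$, so $|N[c_i]| = t$. This establishes that $2$-clubs of size exactly $t$ of this form exist.

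Next I would show that every $2$-club $S$ in $G_{M,t,2}$ has $|S| \le t$, and that equality forces $S = N[c_i]$ for some $i$. The key structural observation is the classification of vertices: each $h_{i,j}$ has degree $1$ (its only neighbour being $c_i$), each $u \in X \cup Y \cup Z$ has degree $2$, and each $c_i$ has degree $t-1$. I would argue by cases on $|S \cap C|$ where $C = \{c_1,\dots,c_m\}$. If $S$ contains two vertices $c_i, c_j$ with $i \ne j$, then for $S$ to have diameter $\le 2$ there must be a common neighbour of $c_i$ and $c_j$ in $S$; the common neighbours of $c_i$ and $c_j$ in $G_{M,t,2}$ lie in $X\cup Y\cup Z$ (since the $h$-vertices have a unique neighbour), and moreover $S$ can contain no $h_{i,\ell}$ or $h_{j,\ell}$ vertex, since such a vertex would be at distance $\ge 3$ from $c_j$ (resp.\ $c_i$). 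A short counting argument on how many $W$-vertices can simultaneously be at distance $\le 2$ from all the chosen $c$'s, together with the fact that each $W$-vertex has degree $2$, then bounds $|S|$ well below $t$ (this is where I'd use $t \ge 5$).

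If $S \cap C = \emptyset$, then $S$ consists only of $W$- and $H$-vertices; but an $H$-vertex $h_{i,j}$ has its unique neighbour $c_i \notin S$, so it is isolated in $G[S]$ and hence $|S| \le 1$ unless $S \cap H = \emptyset$, in which case $S \subseteq W$ and the degree-$2$ bound gives a tiny club. The remaining, main case is $|S \cap C| = 1$, say $c_i \in S$. Then every other vertex of $S$ is within distance $2$ of $c_i$; the vertices at distance $1$ from $c_i$ are exactly $N(c_i)$, and a vertex at distance exactly $2$ from $c_i$ would have to go through some $W$-vertex adjacent to $c_i$ and then to another $c$-vertex, or through a $W$-vertex to nothing else useful — in any case such a distance-$2$ vertex is another $c_j$ or an $h_{j,\ell}$, and I'd check that adding any such vertex creates a pair at distance $\ge 3$ inside $S$ (e.g.\ that vertex versus an $h_{i,\ell'}$, or versus a $W$-neighbour of $c_i$ not adjacent to $c_j$). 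Hence $S \subseteq N[c_i]$, giving $|S| \le t$, with equality iff $S = N[c_i]$.

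I expect the main obstacle to be the bookkeeping in the $|S\cap C|\ge 2$ case: one must carefully verify, using only that $W$-vertices have degree exactly $2$ and $H$-vertices degree $1$, that the number of vertices that can coexist in a diameter-$2$ subgraph containing two distinct $c$'s is at most some constant strictly smaller than $t$ (for all $t \ge 5$). This is analogous to the corresponding step in Lemma~\ref{lem:tre} but cleaner, since diameter $2$ rather than $3$ leaves much less room; I would phrase it as: a common neighbour in $W$ of $c_i$ and $c_j$ uses up both of its edges, so it cannot also reach a third $c_\ell$, which quickly caps $|S\cap W|$ and forbids $S\cap H$, yielding $|S| \le |S\cap C| + |S\cap W|$, a constant independent of $t$ and hence less than $t$.
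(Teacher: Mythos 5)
Your proposal is correct and follows essentially the same route as the paper: classify vertices into $W$, $C$, $H$; show that a $2$-club containing two or more $C$-vertices excludes $H$, forces every vertex of $S\cap W$ to be adjacent to every vertex of $S\cap C$ (bipartite parity), and then the degree bounds cap its size at a small constant, while a $2$-club with at most one $C$-vertex is contained in some $N[c_i]$, giving size at most $t$ with equality only for $N[c_i]$. The one point worth making explicit in your sketch is the actual constant in the $|S\cap C|\ge 2$ case: degree $2$ on $W$-vertices gives $|S\cap C|\le 2$, and since two distinct ordered triples agree in at most two coordinates the two $c$'s have at most two common $W$-neighbours, so $|S|\le 4 < t$, which keeps the strict inequality even for $t=5$ (the paper states this bound of $4$ directly).
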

\begin{proof}
To prove the claim, we will show that any $2$-club in $G_{M,t,2}$ either contains exactly $t$ vertices or has at most $4$ vertices (with $4<t$). 
Clearly the subgraphs   of $G_{M,t,2}$ induced by  $N[c_i]$ for $1\leq i\leq m$ are $2$-clubs of  size $t$. \\
To simplify the analysis, we classify the vertices in $G_{M,t,2}$ into three distinct sets: the vertices in $W= X\cup Y\cup Z$, the vertices in $C=\{c_1\ldots c_{m}\}$ and the vertices in 
 $H= \{h_{i,j}|  i\leq [m],\,\, j\in [t-4] \}$.\\
Let $S$ be a $2$-club  in $G_{M,t,2}$. We will now show that  if $S$ contains at least two vertices from $C$ then $|S|\leq 4<t$. \\
Note that  the number of vertices in $S\cap H$ is zero as a vertex in $H$ of $G_{M,t,2}[S]$ would be at distance of at least $3$ from one of the two vertices in $C$.  Furthermore, in $G_{M,t,2}[S]$ each vertex in $S\cap C$   must be adjacent to each vertex in $S\cup W$.  Consequently,   $S$ forms a complete bipartite graph having $S\cap C$ and $S\cap W$ as independent sets. The  degree of the vertices in $S\cap W$ is at most two thus $S$  can have at most   $4$ vertices (two in $S\cap C$ and two in $S\cap W$). 
\end{proof}

\begin{theorem}
\label{theo:APX2}
Let   $t$ a constant with , $t\geq 5$. It is  NP-hard to approximate the solution of \maxdcc{t}{2}  to within $\frac{95}{94}$ even for bipartite graphs of degree at most $t-1$.  
\end{theorem}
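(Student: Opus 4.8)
The plan is to give an $L$-reduction from Max-2B3DM to Max-DCC($t,2$) with parameters $\alpha=t$ and $\beta=\frac1t$, exactly mirroring the proof of Theorem~\ref{theo:APX3} but using the gadget $G_{M,t,2}$ and Lemma~\ref{due} in place of $G_{M,t,3}$ and Lemma~\ref{lem:tre}. First, given an instance $M=\{C_1,\dots,C_m\}$ of Max-2B3DM, Claim~\ref{claim2} supplies in polynomial time the bipartite graph $G_{M,t,2}$ of maximum degree $t-1$; this discharges the first requirement of an $L$-reduction and simultaneously establishes the ``bounded-degree bipartite'' part of the statement.

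Next I would establish the two-way correspondence between matchings of $M$ and disjoint $(t,2)$-club covers of $G_{M,t,2}$. In one direction, from a matching $\{C_{i_1},\dots,C_{i_k}\}$ with $C_{i_\ell}=(x_\ell,y_\ell,z_\ell)$, take $S'_\ell=N[c_{i_\ell}]=\{c_{i_\ell},x_\ell,y_\ell,z_\ell,h_{i_\ell,1},\dots,h_{i_\ell,t-4}\}$: each $S'_\ell$ is a $2$-club of size exactly $t$, hence a $(t,2)$-club, and since the chosen triples share no coordinate, the $c$-vertices, $h$-vertices and $W$-vertices of distinct $S'_\ell$ are all distinct, so the $S'_\ell$ are pairwise disjoint and cover $tk$ vertices. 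Conversely, let $S_1,\dots,S_r$ be a disjoint cover by $(t,2)$-clubs covering $K$ vertices. By Lemma~\ref{due} each $S_j$ has size at most $t$ and, being a $(t,2)$-club, at least $t$, hence exactly $t$, hence equals some $N[c_{i_j}]$; so $K=tr$ is a multiple of $t$. Reading off the triples $C_{i_1},\dots,C_{i_r}$, disjointness of the clubs forces them to agree in no coordinate (a shared coordinate $w\in W$ would belong to two clubs), so they form a matching of size $K/t$ in $M$, computable in polynomial time.

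From this correspondence the $L$-reduction inequalities are immediate: the first direction gives $opt(G_{M,t,2})\ge t\cdot opt(M)$ and the second gives $opt(G_{M,t,2})\le t\cdot opt(M)$, so $opt(I_2)=t\cdot opt(I_1)$ and $\alpha=t$ works; moreover, for any feasible cover $s_2$ the extracted matching $s_1$ satisfies $c_1(s_1)=\frac1t c_2(s_2)$, and combined with $opt(I_1)=\frac1t opt(I_2)$ this yields $|opt(I_1)-c_1(s_1)|=\frac1t|opt(I_2)-c_2(s_2)|$, so $\beta=\frac1t$. Since $\alpha\beta=1$, Theorem~\ref{theo:app} shows that a $\frac{95}{94}$-approximation for Max-DCC($t,2$) would give a $\frac{95}{94}$-approximation for Max-2B3DM, contradicting the hardness result of \cite{chlebik2006}; this establishes the theorem, and it applies to the bipartite degree-$(t-1)$ instances produced by the reduction.

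The only genuine work (and the mildest of obstacles) is the bookkeeping that disjointness of the sets $N[c_i]$ is equivalent to the matching condition on the underlying triples, together with the correct use of Lemma~\ref{due} to force every club of an optimal cover to have size exactly $t$; everything else is a transcription of the $s=3$ argument. One should also note the trivial edge case (the empty matching / empty cover) so that the reduction and the inequalities are well defined on all instances.
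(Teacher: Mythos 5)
Your proposal is correct and follows essentially the same route as the paper: an $L$-reduction from Max-2B3DM via the gadget $G_{M,t,2}$ (Claim~\ref{claim2}), using Lemma~\ref{due} to force every $(t,2)$-club in a cover to be some $N[c_i]$, with the same parameters $\alpha=t$, $\beta=\frac{1}{t}$ and the same appeal to Theorem~\ref{theo:app} and the hardness of Max-2B3DM. The only difference is that you spell out the disjointness-equals-matching bookkeeping slightly more explicitly than the paper does, which is fine.
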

\begin{proof}
Let $G_{M,t,2}$ be the  graph obtained from an instance $M$ of Max-2B3DM, for a given fixed $t$, with $t\geq 5$. By Claim~\ref{claim2} we have that $G_{M,t,2}$ can be constructed in polynomial time, has a maximum degree of $t-1$ and is bipartite.

\noindent We prove now  that from a matching of $k\geq 1$ triples in $M$, we can always obtain in polynomial time, a disjoint cover with $(t,2)$-clubs that cover $t\cdot k$ vertices in $G_{M,t,2}$. Moreover, for every disjoint cover of $G_{M,t,2}$ it is possible to obtain in polynomial time a matching of $\frac{k}{t}$ triples in $M$ (note that by Lemma \ref{due} we have that $k$ is a multiple of $t$).\\

Let $S=\{C_1,\ldots C_k\}$ be a subset of $k$ triples from  $M$ that correspond to a matching. We define for each $i\in [k]$ the following set in $G_{M,t,2}$
$$S'_i=\{x,y,z,c_i,h_{i,1}\ldots h_{i,t-4}\}$$
Note that $S'_i$ is obviously a $2$-club of size $t$ and, since the triples in $S$ form a matching, it follows that $S'_i\cap S'_j =\emptyset$ for any $i\neq j$. Thus $S'_1,S'_2,\ldots S'_k$ form a disjoint cover with $(t,2)$-clubs that cover $t\cdot k$ vertices of $G_{M,t,2}$.

Let  now  $S_1,S_2,\ldots S_r$ be a  disjoint cover with $(t,2)$-clubs that cover $k$ vertices in $G_{M,t,2}$. By Lemma \ref{due}, we know that this cover  is the union of   $\frac{k}{t}$ disjoint $2$-clubs, each  containing $t$ vertices  of the form   $N[c_i]$. for some $i, 1\leq i\leq |M|$.  where $c_i$ corresponds to a triple $(x,y,z)$ in $M$. To obtain the $\frac{k}{i}$ triples in $M$ corresponding to a matching, is enough to take the triples corresponding to the $\frac{k}{i}$ To obtain the    $\frac{k}{t}$ triples it is enough to take the triples corresponding to the  $\frac{k}{t}$ disjoint $2$-clubs from $S$.

We therefore have an $L$-reduction with $\alpha=t$ and $\beta=\frac{1}{t}$ from  Max-2B3DM to  \maxdcc{t}{2}. 
In \cite{chlebik2006}, it was proven that for Max-2B3DM, it is NP-hard to achieve an approximation factor of $\frac{95}{94}$. Thus, from Theorem~\ref{theo:app}, we deduce the same result for \maxdcc{t}{2}.
\end{proof}

In \cite{Dondi2019}, it is shown that for any  $s\geq 3$ \maxdcc{2}{s}  is solvable in linear time. The authors leave the problem of determining the complexity of \maxdcc{2}{s}  as an open problem. We prove the following ideas similar to the one in \cite{Dondi2019}.\\

\begin{theorem} \maxdcc{2}{2}   can be solved in linear time. 
\end{theorem}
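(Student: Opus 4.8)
The plan is to reduce the problem to a clean structural characterisation: in every graph $G$, an optimal Max-DCC($2,2$) solution covers exactly the non-isolated vertices, so $\mathrm{opt}(G)=|V|-\iota(G)$, where $\iota(G)$ is the number of isolated vertices of $G$. One inequality is immediate: if $v$ is isolated then no $(2,2)$-club $S$ can contain $v$, since $S$ would contain a second vertex $w$ with $d_{G[S]}(v,w)\le 2$, impossible when $v$ has degree $0$. Hence no feasible solution covers more than $|V|-\iota(G)$ vertices, and it remains to exhibit a feasible solution attaining this bound, and to compute it in time $O(|V|+|E|)$.

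The heart of the argument is a decomposition lemma: \emph{the vertex set of any connected graph on at least two vertices can be partitioned into $2$-clubs.} It suffices to prove this for a spanning tree $T$ of the component, i.e. to partition $V(T)$ into parts each inducing a subtree of $T$ of diameter at most $2$; since adding edges only decreases distances, each such part remains a $2$-club in $G$. I would prove the tree statement by induction on $|V(T)|$. If $\mathrm{diam}(T)\le 2$ — equivalently $T$ is a star $K_{1,k}$ — output the single part $V(T)$. Otherwise root $T$ arbitrarily; since $\mathrm{diam}(T)\ge 3$, $T$ has height at least $2$, so it contains a non-leaf at depth $\ge 1$. Let $v$ be a non-leaf of maximum depth: all children of $v$ are leaves, $v$ has a parent $u$, and $P:=\{v\}\cup\mathrm{children}(v)$ induces a star. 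Set $T':=T-P$, a subtree containing $u$. If $V(T')=\{u\}$ then (tree acyclicity) $T$ would be precisely the star with centre $v$ and leaves $\mathrm{children}(v)\cup\{u\}$, contradicting $\mathrm{diam}(T)\ge 3$; hence $|V(T')|\ge 2$, so induction applies to $T'$ and we adjoin $P$. Applying this to every connected component with $\ge 2$ vertices, and leaving isolated vertices uncovered, yields a feasible solution covering $|V|-\iota(G)$ vertices, proving $\mathrm{opt}(G)=|V|-\iota(G)$.

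For the linear-time implementation I would: (i) compute the connected components and, for each nontrivial one, a rooted spanning tree, by a single BFS/DFS in time $O(|V|+|E|)$; (ii) realise the peeling in one bottom-up sweep — process the vertices of each rooted spanning tree in post-order, and whenever the current vertex $v$ still has an \emph{unclaimed} child, output the group consisting of $v$ together with all its currently unclaimed children (this is exactly ``peel the star at $v$''); (iii) apply a fix-up: if the root $r$ ends up unclaimed — which happens precisely when each child of $r$ has become a group centre — merge $r$ into the group of one of its children $c$, noting that $\{r,c\}$ together with the claimed children of $c$ still induces a diameter-$\le 2$ subtree. A short case analysis (if a non-root vertex $u$ is unclaimed at its own turn, then when its parent $p$ is processed $p$ is still present and has the unclaimed child $u$, so $p$'s star claims $u$) shows that no vertex other than $r$ can remain unclaimed, so this single sweep produces the desired partition in total time $O(|V|+|E|)$.

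The main obstacle is the decomposition lemma, specifically guaranteeing that the peeling never strands a vertex with no neighbouring $2$-club to absorb it. The clean way around it is the observation above that the recursion's leftover subtree $T'$ automatically has at least two vertices (otherwise the whole tree already had diameter $\le 2$ and would have been emitted in one piece), which localises the stranding issue to the single, easily patched case of the global root. The remaining ingredients — the isolated-vertex upper bound and the bookkeeping that makes the construction run in linear rather than merely polynomial time — are routine.
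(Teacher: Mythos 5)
Your proposal is correct and follows essentially the same route as the paper: discard isolated vertices, take a rooted spanning tree of each nontrivial component, repeatedly peel off the star formed by a deepest non-leaf together with its leaf children, and absorb a possibly stranded vertex into an adjacent club. The differences are cosmetic: you organise the peeling as an induction (handling the stranded vertex at the root, rather than merging a final leftover vertex into the last club as the paper does) and you spell out the post-order sweep that actually achieves the linear running time claimed in the statement, a detail the paper's proof leaves implicit.
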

\begin{proof}
Let $G$ be the input graph for \maxdcc{2}{2} . We will prove the claim by describing a linear algorithm that produces a cover of disjoint $(2,2)$-clubs that covers the maximum number of vertices of $G$. Consider the graph $G'$, obtained by  removing the isolated vertices from $G$. Note that isolated vertices cannot be covered by a $(2,2)$-club, so they can be disregarded.
We can assume that  $G'$ is connected as otherwise, we apply the following procedure  to each connected component. 

From $G'$, construct a rooted spanning tree $T_{G'}$.  This can be done in linear time see \cite{cormen01introduction}.  Notice that $T_{G'}$ has a height of at least one as $G'$ has no isolated vertices. Thus, let $x$ be a vertex of $T_{G'}$ at maximum level and let $y$ be the parent of $x$. Consider the subtree $T_y$ rooted at $y$. Clearly, $T_y$ contains at least $2$ vertices. Moreover, as $x$ has maximum level then $T_y$ is of height 1 and all vertices in this subtree are at a distance at most $2$ from each other. Hence, the set $S_1$ of vertices in $T_y$ form a  $(2,2)$-club. We add $S_1$ to the solution and remove $T_y$ from $T_{G'}$. The remaining tree is still connected, so we can iterate the process until either we reach an empty tree or we reach a tree consisting of a single vertex. In the latter case, let $s$ be this vertex and let $S_i$ be the $(2,2)$-club added to the solution by the last iteration. Notice that $s$ is at distance 1 from the root of this subtree and at distance 2 from any vertex of $S_i$. Therefore, we can add $s$  to $S_i$ and still have a $(2,2)$-club. In the end  the solution obtained is a cover of \emph{all} the vertices of $G'$ with disjoint $(2,2)$-clubs and thus is a  solution for \maxdcc{2}{s} on $G$.
 \end{proof}

\section{Open problems}\label{sec:conslusions}
Several problems remain open in the context of bipartite graphs. Specifically, the cases of \pc{t}{4} and \pc{t}{6} for any $t\geq 2$ remain unsolved.

Regarding the second problem, for $s=3$, the complexity of the M\maxdcc{t}{3}  problem is still open for $3 \leq t \leq 7$. Similarly, for $s=2$, the case of \maxdcc{4}{2}  remains unsolved.

\section*{Acknowledgements}
The authors thank the reviewers for carefully reading of the paper and in particular Reviewer 2 for identifying a flaw in the previous proof of the complexity of \pc{t}{s}, which gave us the opportunity to reconsider the argument and strengthen the result.

This research was supported in part by MUR PRIN Project EXPAND, grant number 2022TS4Y3N.

\bibliographystyle{plain}
\bibliography{myreferences}

\begin{thebibliography}{10}

\bibitem{Abbas1999}
Nesrine Abbas and Lorna Stewart.
\newblock Clustering bipartite and chordal graphs: Complexity, sequential and parallel algorithms.
\newblock {\em Discrete Applied Mathematics}, 91(1–3):1–23, January 1999.

\bibitem{Asratian1998}
Armen~S. Asratian, Tristan M.~J. Denley, and Roland H\"{a}ggkvist.
\newblock {\em Bipartite Graphs and their Applications}.
\newblock Cambridge University Press, Cambridge, UK, July 1998.

\bibitem{Barber2007}
Michael~J. Barber.
\newblock Modularity and community detection in bipartite networks.
\newblock {\em Phys. Rev. E}, 76:066102, Dec 2007.

\bibitem{BAZGAN2025}
Cristina Bazgan, Pinar Heggernes, André Nichterlein, and Thomas Pontoizeau.
\newblock On the hardness of problems around s-clubs on split graphs.
\newblock {\em Discrete Applied Mathematics}, 364:247--254, 2025.

\bibitem{Calderer2021}
Genís Calderer and Marieke~L. Kuijjer.
\newblock Community detection in large-scale bipartite biological networks.
\newblock {\em Frontiers in Genetics}, 12, April 2021.

\bibitem{Chang2014}
Jou-Ming Chang, Jinn-Shyong Yang, and Sheng-Lung Peng.
\newblock On the complexity of graph clustering with bounded diameter.
\newblock In {\em 2014 International Computer Science and Engineering Conference (ICSEC)}, volume~10, page 18–22, Khon Kaen, Thailand, July 2014. IEEE.

\bibitem{chlebik2003}
Miroslav Chleb{\'\i}k and Janka Chleb{\'\i}kov{\'a}.
\newblock Approximation hardness for small occurrence instances of np-hard problems.
\newblock In {\em Italian Conference on Algorithms and Complexity}, pages 152--164. Springer, 2003.

\bibitem{chlebik2006}
Miroslav Chlebík and Janka Chlebíková.
\newblock Complexity of approximating bounded variants of optimization problems.
\newblock {\em Theoretical Computer Science}, 354(3):320--338, 2006.
\newblock Foundations of Computation Theory (FCT 2003).

\bibitem{cormen01introduction}
Thomas~H. Cormen, Charles~E. Leiserson, Ronald~L. Rivest, and Clifford Stein.
\newblock {\em Introduction to Algorithms}.
\newblock The MIT Press, Cambridge, MA, 2nd edition, 2001.

\bibitem{Deogun1997}
Jitender~S. Deogun, Dieter Kratsch, and George Steiner.
\newblock An approximation algorithm for clustering graphs with dominating diametral path.
\newblock {\em Information Processing Letters}, 61(3):121–127, February 1997.

\bibitem{DondiL23}
Riccardo Dondi and Manuel Lafond.
\newblock On the tractability of covering a graph with 2-clubs.
\newblock {\em Algorithmica}, 85(4):992--1028, 2023.

\bibitem{Dondi2019}
Riccardo Dondi, Giancarlo Mauri, and Italo Zoppis.
\newblock On the tractability of finding disjoint clubs in a network.
\newblock {\em Theoretical Computer Science}, 777:243–251, July 2019.

\bibitem{Fleischner2009}
Herbert Fleischner, Egbert Mujuni, Daniël Paulusma, and Stefan Szeider.
\newblock Covering graphs with few complete bipartite subgraphs.
\newblock {\em Theoretical Computer Science}, 410(21–23):2045–2053, May 2009.

\bibitem{GareyJ79}
M.~R. Garey and David~S. Johnson.
\newblock {\em Computers and Intractability: {A} Guide to the Theory of NP-Completeness}.
\newblock W. H. Freeman, San Francisco, CA, 1979.

\bibitem{Komusiewicz2016}
Christian Komusiewicz.
\newblock Multivariate algorithmics for finding cohesive subnetworks.
\newblock {\em Algorithms}, 9(1):21, March 2016.

\bibitem{Laan2016}
S.~Laan, M.~Marx, and R.~J. Mokken.
\newblock Close communities in social networks: boroughs and 2-clubs.
\newblock {\em Social Network Analysis and Mining}, 6(1), April 2016.

\bibitem{Mokken1979}
Robert~J. Mokken.
\newblock Cliques, clubs and clans.
\newblock {\em Quality and Quantity}, 13(2):161--173, April 1979.

\bibitem{Mokken2016}
Robert~J. Mokken, Eelke~M. Heemskerk, and Steven Laan.
\newblock Close communication and 2-clubs in corporate networks: Europe 2010.
\newblock {\em Social Network Analysis and Mining}, 6(1), June 2016.

\bibitem{williamson2011design}
David~P Williamson and David~B Shmoys.
\newblock {\em The design of approximation algorithms}.
\newblock Cambridge university press, Cambridge, UK, 2011.

\bibitem{Zoppis2018}
Italo Zoppis, Riccardo Dondi, Eugenio Santoro, Gianluca Castelnuovo, Francesco Sicurello, and Giancarlo Mauri.
\newblock Optimizing social interaction.
\newblock In {\em Proceedings of the 11th International Joint Conference on Biomedical Engineering Systems and Technologies}, page 651–657, Funchal, Madeira, Portugal, 2018. SCITEPRESS - Science and Technology Publications.

\bibitem{ZOU2018}
Peng Zou, Hui Li, Wencheng Wang, Chunlin Xin, and Binhai Zhu.
\newblock Finding disjoint dense clubs in a social network.
\newblock {\em Theoretical Computer Science}, 734:15--23, 2018.
\newblock Selected papers from the The Tenth International Frontiers of Algorithmics Workshop (FAW 2016.

\end{thebibliography}

\end{document}